\newtheorem{theorem}{Theorem}
\newtheorem{definition}{Definition}
\newtheorem{lemma}{Lemma}
\begin{document}


\title{Covariate-Adjusted Inference for Differential Analysis of High-Dimensional Networks}
\author{Aaron Hudson and
Ali Shojaie \\
Department of Biostatistics, University of Washington, Seattle, Washington}
\date{}

\maketitle

\begin{abstract}
Differences between biological networks corresponding to disease conditions can help delineate the underlying disease mechanisms.
Existing methods for differential network analysis do not account for dependence of networks on covariates.
As a result, these approaches may detect spurious differential connections induced by the effect of the covariates on both the disease condition and the network.
To address this issue, we propose a general covariate-adjusted test for differential network analysis.
Our method assesses differential network connectivity by testing the null hypothesis that the network is the same for individuals who have identical covariates and only differ in disease status.
We show empirically in a simulation study that the covariate-adjusted test exhibits improved type-I error control compared with na\"ive hypothesis testing procedures that do not account for covariates.
We additionally show that there are settings in which our proposed methodology provides improved power to detect differential connections.
We illustrate our method by applying it to detect differences in breast cancer gene co-expression networks by subtype.
\end{abstract}


\section{Introduction}


Complex diseases are often associated with aberrations in biological networks, such as gene regulatory networks and brain functional or structural connectivity networks \citep{barabasi2011network}.
Performing differential network analysis, or identifying connections in biological networks that change with disease condition, can provide insights into the disease mechanisms and lead to the identification of network-based biomarkers \citep{ideker2012differential, de2010differential}.

Probabilistic graphical models are commonly used to summarize the conditional independence structure of a set of nodes in a biological network.
A common approach to differential network analysis is to first estimate the graph corresponding to each disease condition and then assess between-condition differences in the graph.
For instance, when using Gaussian graphical models, one can learn the network by estimating the inverse covariance matrix using the graphical LASSO \citep{friedman2008sparse}; one can then identify changes in the inverse covariance matrix associated with disease condition \citep[][]{zhao2014direct, xia2015testing, he2019statistical}.
Alternatively, the condition-specific networks can be estimated using neighborhood selection \citep{meinshausen2006high}; in this approach, partial correlations among nodes are estimated by fitting a series of linear regressions in which one node is treated as the outcome, and the remaining nodes are treated as regressors. Changes in the network can then be delineated from differences in the regression coefficients by disease condition \cite[][]{belilovsky2016testing, xia2018two}.
More generally, the condition-specific networks are often modeled using exponential family pairwise interaction models \cite{lin2016estimation, yang2015graphical, yu2019generalized, yu2020simultaneous}.

The approach to differential network analysis described above may lead to the detection of between-group differences in biological networks that are not necessarily meaningful, in particular, when the condition-specific networks depend on covariates (e.g., age and sex).
This is because between-group network differences can be induced by \textit{confounding variables}, i.e., variables that are associated with both the within-group networks, and the disease condition.
In such cases, the network differences by disease condition may only reflect the association between the confounding variable and the disease.
It is therefore important to account for the relationship between covariates and biological networks when performing differential network analysis.

In this paper, we propose a two-sample test for differential network analysis that accounts for within-group dependence of the networks on covariates.
More specifically, we propose to perform covariate-adjusted inference using a class of pairwise interaction models for the within-group networks.
Our approach treats each condition-specific network as a function of the covariates.
It then performs a hypothesis test for equivalence of these functions.
To accommodate the high-dimensional setting, in which the number of nodes in the network is large relative to the number of samples collected, we propose to estimate the networks using a regularized estimator and to perform hypothesis testing using a bias-corrected version of the regularized estimate \citep{van2016estimation}.

Our proposal is related to existing literature on modeling networks as functions of a small number of variables.
For example, there are various proposals for estimating high-dimensional inverse covariance matrices, conditional upon continuous low-dimensional features \citep{zhou2010time,wang2014inference}.
Also related are methods for regularized estimation of high-dimensional varying coefficient models, wherein the regression coefficients are functions of a small number of covariates  \cite{wang2009shrinkage}.
Our method is similar but places a particular emphasis on hypothesis testing in order to assess the statistical significance of observed changes in the network.
Our approach lays the foundation for a general class of graphical models and is the first, to the best of our knowledge, to perform covariate-adjusted hypothesis tests for differential network analysis.

The rest of the paper is organized as follows.
In Section 2, we begin with a broad overview of our proposed framework for covariate-adjusted  differential network analysis in  pairwise interaction exponential family models and introduce some working examples.
In the following sections, we specialize our framework by considering two different approaches for estimation and inference:
In Section 3, we describe a method that uses neighborhood selection \citep{meinshausen2006high, chen2015selection, yang2015graphical}, and in Section 4, we discuss an alternative estimation approach that utilizes the score matching framework of Hyv\"arinen \citep{hyvarinen2005estimation, hyvarinen2007some}.
We assess the performance of our proposed methodology on synthetic data in Section 5 and apply it to a breast cancer data set from The Cancer Genome Atlas (TCGA) \citep{weinstein2013cancer} in  Section 6.
We conclude with a brief discussion in Section 7.

%

\section{Overview of the Proposed Framework}

\subsection{Differential Network Analysis without Covariate Adjustment}

To formalize our problem, we begin by introducing some notation.
We compare networks between two groups, labeled by $g \in \{\mathrm{I},\mathrm{II}\}$.
We obtain measurements of $p$ variables $X^g = \left(X^g_1,\ldots,X_p^g\right)^\top$, corresponding to nodes in a  graphical model \citep{maathuis2018handbook}, on $n^{\mathrm{I}}$ subjects in group $\mathrm{I}$ and $n^{\mathrm{II}}$ subjects in group $\mathrm{II}$.
We define $\mathcal{X} \subseteq \mathds{R}^p$ as the sample space of $X^g$.
Let $X^g_{i,j}$ denote the data for node $j$ for subject $i$ in group $g$, and let  $\mathbf{X}_j^g = (X^g_{1,j}, \ldots, X^g_{n^g,j})^\top$ be an $n^{g}$-dimensional vector of measurements on node $j$ for group $g$.

Our objective is to determine whether the association between variables $X_j$ and $X_k$, conditional upon all other variables, differs by group.
Our approach is to specify a model for $X^g$ such that the conditional dependence between any two nodes $X^g_{j}$ and $X^g_{k}$ can be represented by a single scalar parameter $\beta^{g,*}_{j,k}$.
If the association between nodes $j$ and $k$ is the same in both groups $\mathrm{I}$ and $\mathrm{II}$,  $\beta^{\mathrm{I},*}_{j,k} = \beta^{\mathrm{II},*}_{j,k}$.
Conversely, if $\beta^{\mathrm{I},*}_{j,k} \neq \beta^{\mathrm{II},*}_{j,k}$, we say nodes $j$ and $k$ are \textit{differentially connected}.
We assess for differential connectivity by performing a test of the null hypothesis
\begin{align}
H^0_{j,k}: \beta^{\mathrm{I},*}_{j,k} =\beta^{\mathrm{II},*}_{j,k}.
\label{NullUnadj}
\end{align}

We consider a general class of exponential family pairwise interaction models.
For $x = (x_1,\ldots,x_p)^\top$, we assume the density function for $X^g$ takes the form
\begin{align}
f^{g,*}(x) = \exp\left(\sum_{j=1}^p \mu_j(x_j) + \sum_{j = 1}^p \sum_{k = 1}^{j} \beta^{g,*}_{j,k}\psi_{j,k}(x_j, x_k) - U\left(\boldsymbol{\beta}^{g,*} \right)  \right),
\label{ExpFam}
\end{align}
where $\psi_{j,k}$ and $\mu_j$ are fixed and known functions,  $\boldsymbol{\beta}^{g,*}$ is a $p\times p$ matrix with elements $\beta^{g,*}_{j,k}$, and $U(\boldsymbol{\beta}^{g,*})$ is the log-partition function.
The dependence between $X^g_{j}$ and $X^g_{k}$ is measured by $\beta^{g,*}_{j,k}$, and nodes $j$ and $k$ are conditionally independent in group $g$ if and only if $\beta^{g,*}_{j,k} = 0$.

This class of exponential family distributions is rich and includes several models that have been studied previously in the graphical modeling literature.
One such example is the Gaussian graphical model, perhaps the most widely-used graphical model for continuous data.
For $x \in \mathds{R}^p$ the density function for mean-centered Gaussian random vectors can be expressed as
\begin{align}
f^{g,*}(x) \propto \exp\left( -\sum_{j = 1}^p \sum_{k = 1}^j \beta^{g,*}_{j,k} x_jx_k \right),
\label{ExpFamGGM}
\end{align}
and is thus a special case of \eqref{ExpFam} with $\psi_{j,k} = -x_jx_k$ and $\mu_j = 0$.
The non-negative Gaussian density, which takes the form of \eqref{ExpFamGGM} with the constraint that $x$ takes values in $\mathds{R}^p_+$, also belongs to the exponential family class.
Another canonical example is the Ising model, commonly used for studying conditional dependencies among binary random variables.
For $x \in \{0,1\}^p$, the density function for the Ising model can be expressed as
\begin{align*}
f(x) \propto \exp\left(\sum_{j = 1}^p \sum_{k=1}^j \beta_{j,k} x_jx_k\right).
\end{align*}
Additional examples include the Poisson model, the exponential graphical model, and conditionally-specified mixed graphical models \citep{yang2015graphical, chen2015selection}.

When asymptotically normal estimates of $\beta^{\mathrm{I},*}_{j,k}$ and $\beta^{\mathrm{II},*}_{j,k}$  are available, one can perform a calibrated test of $H^0_{j,k}$ based on the difference between the estimates.
In many cases, asymptotically normal estimates can be obtained using well-established methodology.
For instance, when the log-partition function $U(\boldsymbol{\beta}^{g,*})$ is available in closed form and is tractable, one can obtain estimates via (penalized) maximum likelihood.
This is a standard approach in the Gaussian setting, in which case the log-partition function is easy to compute.
However, this is not the case for other exponential family models.
Likelihood-based estimation strategies are thus generally difficult to implement.
In this paper, we consider two alternative strategies that have been proposed to overcome these computational challenges and are more broadly applicable.

The first approach we discuss is neighborhood selection \citep{chen2015selection, meinshausen2006high, yang2015graphical}.
Consider a sub-class of exponential family graphical models for which the conditional density function for any node $X^g_{j}$ given the remaining nodes belongs to a univariate exponential family model.
Because the log-partition function in univariate exponential family models is available in closed form, it is computationally feasible to estimate each conditional density function.
By estimating the conditional density functions, one can identify the \textit{neighbors} of nodes $j$, that is, the nodes upon which the conditional distribution depends.
This approach was first proposed as an alternative to maximum likelihood estimation for estimating Gaussian graphical models \citep{meinshausen2006high}.
To describe our approach, we focus on the Gaussian case, though this approach is more widely applicable and can be used for modeling dependencies among, e.g., Poisson, binomial, and exponential random variables as well \citep{chen2015selection, yang2015graphical}.


In Gaussian graphical models, the dependency of node $j$ on all other nodes can be determined based on the linear model 
\begin{align}
\mathds{E}\left[X^g_{j}| X^g_{1},\ldots,X^g_{p}\right] = \beta^{g,*}_{j,0} + \sum_{k \neq j}\beta^{g,*}_{j,k}X^g_{k}.
\label{UnadjustedModel}
\end{align}
The regression coefficients $\beta^{g,*}_{j,k}$ measure the strength of linear association between nodes $j$ and $k$ conditional upon all other nodes and are zero if and only if nodes $j$ and $k$ are conditionally independent; $\beta^{g,*}_{j,0}$ is an intercept term and is zero if all nodes are mean-centered.
(We acknowledge a slight abuse of notation here, as the regression coefficients in \eqref{UnadjustedModel} are not equivalent to parameters in \eqref{ExpFam}.
However, either estimand fully characterizes conditional independence.)
In the low-dimensional setting (i.e., $p \ll n^g$), statistically efficient and asymptotically normal estimates of the regression coefficients can be readily obtained  via ordinary least squares.
In high-dimensions (i.e., $p \geq n^g$), the ordinary least squares estimates are inconsistent, so to obtain consistent estimates we typically rely upon regularized estimators such as the LASSO and the elastic net \citep{tibshirani1996regression, zou2005regularization}.
Regularized estimators are generally biased and have intractable sampling distributions, and as such, are unsuitable for performing formal statistical inference.
However, several methods have recently emerged for obtaining asymptotically normal estimates by correcting the bias of regularized estimators \citep{javanmard2014confidence, van2014asymptotically, zhang2014confidence}.

The second computationally efficient approach we consider is to estimate the density function using the score matching framework of Hyv\"arinen \citep{hyvarinen2005estimation, hyvarinen2007some}.
Hyv\"arinen derives a loss function for estimation of density functions for continuous random variables that is based on the gradient of the log-density with respect to the observations.
As such, the score matching loss does not depend on the log-partition function in exponential family models.
Moreover, when the joint distribution for $X^g$ belongs to an exponential family model, the loss is quadratic in the unknown parameters, allowing for efficient computation.
In low dimensions, the minimizer of the score matching loss is consistent and asymptotically normal.
In high dimensions, one can obtain asymptotically normal estimates by minimizing a regularized version of the score matching loss to obtain an initial estimate \citep{lin2016estimation, yu2019generalized} and subsequently correcting for the bias induced by regularization \citep{yu2020simultaneous}.

\subsection{Covariate-Adjusted Differential Network Analysis}

We now consider the setting in which the within-group networks depend on covariates.
We denote by $W^g$ a $q$-dimensional random vector of covariate measurements for group $g$, and we define $\mathcal{W}$ as the sample space of $W^g$.
Let $W^g_{i,r}$ refer to the value of covariate $r$ for subject $i$ in group $g$, and let $W^g_i = (W^g_{i,1},\ldots,W^g_{i,q})^\top$ be a $q$-dimensional vector containing all covariates for subject $i$ in group $g$.
We assume the number of covariates is small relative to the sample size (i.e., $q \ll n^g$).

To study the dependence of the within-group networks on the covariates,  we specify a model for the nodes $X^g$ given the covariates $W^g$ that allows for the inter-node dependencies to vary as a function of $W^g$.
The model defines a function $\eta^g_{j,k}: \mathcal{W} \to \mathds{R}$ that takes as input a vector of covariates and returns a measure of association between nodes $j$ and $k$ for a subject in group $g$ with identical covariates.
One can interpret $\eta^{g,*}_{j,k}$ as a conditional version of $\beta^{g,*}_{j,k}$, given the covariates.

We assume that $\eta^{g,*}_{j,k}$ can be written as a low-dimensional linear basis expansion in $W^g$ of dimension $d$ --- that is,
\begin{align}
\eta^{g,*}_{j,k}\left(W^g\right) = \left\langle \phi\left(W^g\right), \alpha_{j,k}^{g,*} \right\rangle,
\label{BasisExp}
\end{align}
where $\phi:\mathds{R}^q\to\mathds{R}^d$ is a map from a set of covariates to its expansion,  $\alpha_{j,k}^{g,*}$ is a $d$-dimensional vector, and $\langle \cdot, \cdot \rangle$ denotes the vector inner product.
Let $\phi_c(w)$ refer to the $c$-th element of $\phi(w)$.
One can take the simple approach of specifying $\phi$ as a linear basis, $\phi(w) = \left(1, w_1, \ldots, w_q\right)$ for $w \in \mathds{R}^q$, though more flexible choices such as polynomial or B-spline bases can also be considered.
It may be preferable to specify $\phi$ so that $\eta^{g,*}_{j,k}$ is an additive function of the covariates.
This allows one to easily assess the effect of any specific covariate on the network by estimating the sub-vector of $\alpha^{g,*}_{j,k}$ that is relevant to the covariate of interest.

When the association between nodes $j$ and $k$ does not depend on group membership, $\eta^{\mathrm{I,*}}_{j,k}(w) = \eta^{\mathrm{II,*}}_{j,k}(w)$ for all $w$, and $\alpha^{\mathrm{I,*}}_{j,k} = \alpha^{\mathrm{II,*}}_{j,k}$.
In other words, if one subject from group $\mathrm{I}$ and another subject from group $\mathrm{II}$ have identically-valued covariates, the corresponding measure of association between nodes $j$ and $k$ is also the same.
In the covariate-adjusted setting, we say that nodes $j$ and $k$ are differentially connected if there exists $w$ such that $\eta^{\mathrm{I,*}}_{j,k}(w) \neq \eta^{\mathrm{II,*}}_{j,k}(w)$, or equivalently, if $\alpha^{\mathrm{I,*}}_{j,k} \neq \alpha^{\mathrm{II,*}}_{j,k}$.
We can thus assess differential connectivity between nodes $j$ and $k$ by testing the null hypothesis
\begin{align}
G^{0}_{j,k}: \alpha^{\mathrm{I,*}}_{j,k} = \alpha^{\mathrm{II,*}}_{j,k}.
\label{NullAdj}
\end{align}
Similar to the unadjusted setting, when asymptotically normal estimates of $\alpha^{\mathrm{I},*}_{j,k}$ and $\alpha^{\mathrm{II},*}_{j,k}$ are available, a calibrated test can be constructed based on the difference  between the estimates.

We now specify a form for the conditional distribution of $X^g$ given $W^g$ as a generalization of the exponential family pairwise interaction model \eqref{ExpFam}.
We assume the conditional density for $X^g$ given $W^g$ can be expressed as
\begin{align}
&f^{g,*}(x|w) \propto
\exp\left(\sum_{j=1}^p \mu_j(x_j) + \sum_{j = 1}^p \sum_{k=1}^j \eta^{g,*}_{j,k}(w)\psi_{j,k}(x_j, x_k) + \sum_{j=1}^p \sum_{c=1}^d \theta_{j,c}^{g,*} \zeta_{j,c}\left(x_j, \phi_c(w)\right) \right),
\label{ExpFam2}
\end{align}
where $w = (w_1,\ldots,w_q)^\top$, and the proportionality is up to a normalizing constant that does not depend on $x$.
Above, $\zeta_{j,c}$ is a fixed and known function, and the main effects of the covariates on $X^g$ are represented by the scalar parameters $\theta^{g,*}_{j,c}$. 
The conditional dependence between nodes $j$ and $k$, given all other nodes and given that $W^g = w$ is quantified by $\eta^{g,*}_{j,k}(w)$, and $\eta^{g,*}_{j,k}(w) = 0$ if and only if nodes $j$ and $k$ are conditionally independent at $w$. 
One can thus view $\eta^{g,*}_{j,k}$ as a conditional version of $\beta^{g,*}_{j,k}$ in \eqref{ExpFam2}.

%
Either of the estimation strategies introduced in Section 2.1 can be used to perform covariate-adjusted inference.
When the conditional distribution of each node given the remaining nodes \textit{and the covariates} belongs to a univariate exponential family model, the covariate-dependent network can be estimated using neighborhood selection because the node conditional distributions can be estimated efficiently with likelihood-based methods.
Alternatively, we can estimate the conditional density function \eqref{ExpFam2} using score matching.

As a working example, we again consider estimation of covariate-dependent Gaussian networks using neighborhood selection.
Suppose the conditional distribution of $X^g$ given $W^g$ takes the form
\begin{align}
&f^{g,*}(x|w) \propto
\exp\left(-\sum_{j = 1}^p \sum_{k=1}^j  \eta^{g,*}_{j,k}(w)x_jx_k - \sum_{j=1}^p \sum_{c=1}^d \theta_{j,c}^{g,*} x_j \phi_c(w) \right).
\label{CondGauss}
\end{align}
Then the dependencies of node $j$ on all other nodes can be determined based on the following varying coefficient model \citep{hastie1993varying}:
\begin{align}
\mathds{E}\left[X^g_{j}|X_{1}^g,\ldots,X_{p}^g, W^g\right] = \eta_{j,0}^{g,*}(W^g)  + \sum_{k \neq j} \eta^{g,*}_{j,k}\left(W^g\right) X_{k}^g.
\label{AdjModel1}
\end{align}
The varying coefficient model is a generalization of the linear model that treats the regression coefficients as functions of the covariates.
In \eqref{AdjModel1}, $\eta^{g,*}_{j,k}(w)$ returns a regression coefficient that quantifies the linear relationship between nodes $j$ and $k$ for subjects in group $g$ with covariates equal to $w$.
Then $X^g_{j}$ and $X^g_{k}$ are conditionally independent given all other nodes and given $W^g = w$ if and only if $\eta^{g,*}_{j,k}(w) = 0$. 
The varying coefficients $\eta^{g,*}_{j,k}$ can thus be viewed as a conditional version of the regression coefficients in \eqref{UnadjustedModel}.
(We have again abused the notation, as the varying coefficient functions in \eqref{AdjModel1} are not equal to the parameters in \eqref{CondGauss}, though both functions are zero for the same values of $w$).
The intercept term $\eta^{g,*}_{j,0}$ accounts for the main effect of $W^{g}$ on $X^g_{j}$.
We can remove this main effect term by first centering the nodes $X^g_{j}$ about their conditional mean given $W^g$ (which can be estimated by performing a linear regression of $X^g_{j}$ on $\phi(W^g)$).

In Sections 3 and 4, we discuss construction of asymptotically normal estimators of $\alpha^{g,*}_{j,k}$ in the low- and high-dimensional settings using neighborhood selection and score matching.
Before proceeding, we first examine the connection between the null hypotheses $H^0_{j,k}$ and $G^0_{j,k}$.

\subsection{The Relationship between Hypotheses $H^0_{j,k}$ and $G^0_{j,k}$}

Hypotheses $H^0_{j,k}$ in \eqref{NullUnadj} and $G^0_{j,k}$ in \eqref{NullAdj} are related but not equivalent.
It is possible that $H^0_{j,k}$ holds while $G^0_{j,k}$ fails and vice versa.
We provide an example below.
Suppose we are using neighborhood selection to perform differential network analysis in the Gaussian setting, so we are making a comparison of linear regression coefficients between the two groups.
Suppose further that the within-group networks depend on single scalar covariate $W^g$, and the nodes are centered about their conditional mean given $W^g$.
One can show that the regression coefficients $\beta^{g,*}_{j,k}$ are equal to the average of their conditional versions $\eta^{g,*}_{j,k}(W^g)$.
That is, $\beta^{g,*}_{j,k} = \mathds{E}[\eta^{g,*}_{j,k}(W^g)]$.
Now, suppose $G^0_{j,k}$ holds.  
If $W^{\mathrm{I}}$ and $W^{\mathrm{II}}$ do not share the same distribution (e.g., the covariate tends to take higher values in group $\mathrm{I}$ than in group $\mathrm{II}$), the average conditional inter-node association may differ, and $H^0_{j,k}$ may not hold.
Although the conditional association between nodes, given the covariate, does not differ by group, the \textit{average} conditional association does differ, as illustrated in Figure \ref{fig:ConfAndPower}a.
In such a scenario, the difference in the average conditional association is induced by the dependence of the covariate on group membership and the dependence of the inter-node association on the covariate.
Thus, inequality of $\beta^{\mathrm{I},*}_{j,k}$ and $\beta^{\mathrm{II},*}_{j,k}$ does not necessarily capture a meaningful association between the network and group membership.
Similarly when $H^0_{j,k}$ holds, it is possible that $\eta^{\mathrm{I},*}_{j,k} \neq \eta^{\mathrm{II},*}_{j,k}$.
For instance, suppose that the distribution of the covariate is the same in both groups, and $\mathds{E}[\eta^g(W^g)] = 0$ in both groups.
If the between-node association depends more strongly upon the covariates in one group than the other,  $G^0_{j,k}$ will be false.
This example is depicted in Figure \ref{fig:ConfAndPower}b.
In this scenario, adjusting for covariates should provide improved power to detect differential connections.
We note that for other distributions, it does not necessarily hold that $\beta^{g,*}_{j,k} = \mathds{E}[\eta^{g,*}_{j,k}(W^g)]$, but regardless, there is generally no equivalence between hypotheses $H^0_{j,k}$ and $G^0_{j,k}$.

\begin{figure}[!h]
\centering
\begin{subfigure}{1\textwidth}
  \centering
  \includegraphics[width=.6\linewidth]{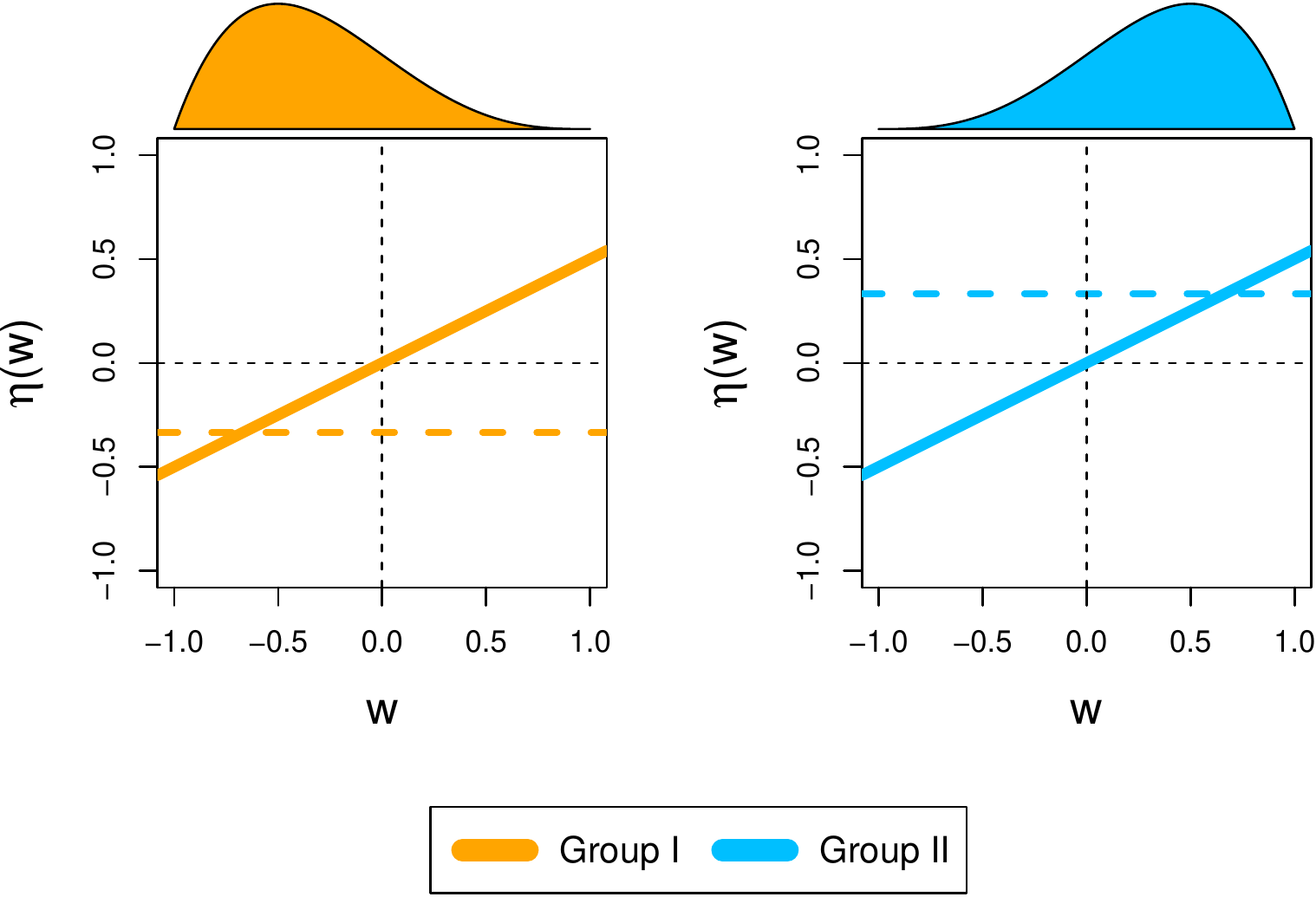}
\caption{}
  \label{fig:sub1}
\end{subfigure}%
\\
\begin{subfigure}{1\textwidth}
  \centering
  \includegraphics[width=.6\linewidth]{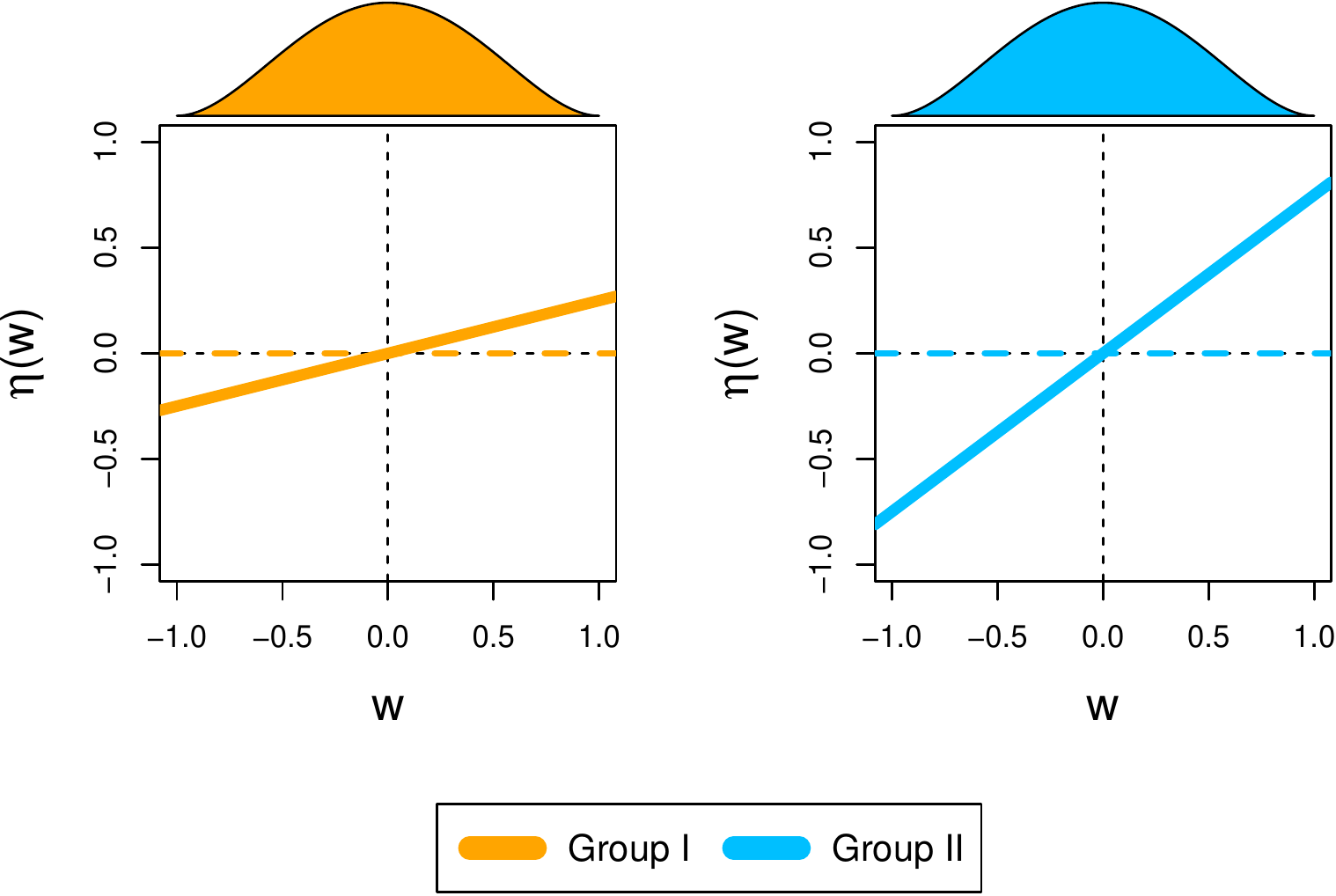}
\caption{}
  \label{fig:sub2}
\end{subfigure}
\caption{Displayed are the association between nodes $j$ and $k$, $\eta^g_{j,k}(\cdot)$, as a function of covariate $W^g$ and the distribution of $W^g$ in groups $\mathrm{I}$ and $\mathrm{II}$.  The average inter-node association is represented by the dashed colored lines. In (a), the average inter-node association depends on group membership, though the inter-node association given the covariate does not. In (b), the average inter-node association does not depend on group membership, though the conditional association between nodes given the covariate does depend on group membership.}
\label{fig:ConfAndPower}
\end{figure}


\section{Covariate-Adjusted Differential Network Analysis Using Neighborhood Selection}

In this section, we describe in detail an approach for covariate-adjusted differential network analysis using neighborhood selection.
To simplify our presentation, we focus on Gaussian graphical models, though this strategy is generally applicable to graphical models for which the node conditional distributions belong to univariate exponential family models.

\subsection{Covariate-Adjustment via Neighborhood Selection in Low Dimensions}

We first discuss testing the unadjusted null hypothesis $H^0_{j,k}$ in \eqref{NullUnadj}, where the $\beta^{g,*}_{j,k}$ are the regression coefficients in \eqref{UnadjustedModel}.
Suppose, for now, that we are in the low-dimensional setting, so the number of nodes $p$ is smaller than the sample sizes $n^g$, $g \in \{\mathrm{I}, \mathrm{II}\}$.

It is well-known that the regression coefficients can be characterized as the minimizers of the  expected least squares loss --- that is,
\begin{align*}
\boldsymbol{\beta}_j^{g,*} = (\beta^{g,*}_{j,1},\ldots,\beta^{g,*}_{j,p})^\top =  \underset{\beta_1,\ldots,\beta_p \in \mathds{R}}{\text{arg min}} \, \mathds{E} \left[ \left(X^g_{j} - \sum_{k \neq j} X^g_{k}\beta_k\right)^2\right].
\end{align*}
One can obtain an estimate $\hat{\boldsymbol{\beta}}_j^g = (\hat{\beta}^g_{j,1},\ldots,\hat{\beta}^g_{j,p})$ of $\boldsymbol{\beta}^{g,*}_j = (\beta^{g,*}_{j,1}\ldots.,\beta^{g,*}_{j,p})$ by minimizing the empirical average of the least squares, taking
\begin{align*}
\hat{\boldsymbol{\beta}}^g_j = \underset{\beta_1,\ldots,\beta_p \in \mathds{R}}{\text{arg min}} \,\frac{1}{n^g}\left\|\mathbf{X}^g_j - \sum_{k \neq j} \mathbf{X}^g_k\beta_k  \right\|_2^2, 
\end{align*}
where $\|\cdot\|_2$ denotes the $\ell_2$ norm.
The ordinary least squares estimate $\hat{\boldsymbol{\beta}}^g_{j}$ is available in closed form and is easy to compute.
The  estimates $\hat{\beta}^g_{j,k}$ are unbiased, and, under mild assumptions, are approximately normally distributed for sufficiently large $n^g$ --- that is,
\begin{align*}
\hat{\beta}^{g}_{j,k} \sim N\left(\beta^{g,*}_{j,k}, \tau^g_{j,k} \right),
\end{align*}
with $\tau^g_{j,k} > 0$ (though $\tau^g_{j,k}$ can be calculated in closed form, we omit the expression for brevity).

We construct a test of $H_{j,k}^0$ based on the difference between the estimates of the group-specific regression coefficients, $\hat{\beta}^{\mathrm{I}}_{j,k} - \hat{\beta}^{\mathrm{II}}_{j,k}$.
When $H^0_{j,k}$ holds, $\hat{\beta}^{\mathrm{I}}_{j,k} - \hat{\beta}^{\mathrm{II}}_{j,k}$ is normally distributed with mean zero and variance $\tau^{\mathrm{I}}_{j,k} + \tau^{\mathrm{II}}_{j,k}$.
Given a consistent estimate $\hat{\tau}^g_{j,k}$ of the variance, we can use the test statistic
\begin{align*}
T_{j,k} = \frac{\left(\hat{\beta}_{j,k}^{\mathrm{I}} - \hat{\beta}^{\mathrm{II}}_{j,k}\right)^2}{\hat{\tau}^{\mathrm{I}}_{j,k} + \hat{\tau}^{\mathrm{II}}_{j,k}},
\end{align*}
which follows a chi-square distribution with one degree of freedom under the null for $n^{\mathrm{I}}$ and $n^{\mathrm{II}}$ sufficiently large.
A p-value for $H^0_{j,k}$ can be calculated as
\begin{align*}
\rho_{j,k} = P\left( \chi^2_1 > T_{j,k} \right).
\end{align*}



In the low-dimensional setting, performing a covariate-adjusted test is similar to performing the unadjusted test.
We can obtain an estimate $\hat{\boldsymbol{\alpha}}^g_j = \left((\hat{\alpha}^g_{j,1})^\top,\ldots,(\hat{\alpha}^g_{j,p})^\top\right)^\top$ of $\boldsymbol{\alpha}^{g,*}_j = \left((\alpha^{g,*}_{j,1})^\top,\ldots,(\alpha^{g,*}_{j,p})^\top\right)^\top$ by minimizing the empirical average of the least squares loss
\begin{align}
\hat{\boldsymbol{\alpha}}^g_j = \underset{\alpha_{j,1},\ldots,\alpha_{j,p} \in \mathds{R}^{d}}{\text{arg min}} \,\frac{1}{n^g}\sum_{i=1}^{n^g} \left(X^g_{i,j} - \sum_{k \neq j} \left\langle \phi\left(W_i^g\right), \alpha_{j,k} \right\rangle X_{i,k}^g\right)^2.
\label{AlphaOLS1}
\end{align}
To simplify the presentation, we introduce additional notation that allows us to rewrite \eqref{AlphaOLS1} in a condensed form. 
Let $\mathcal{V}^g_k$ be the $n^g \times d$ matrix
\begin{align}
\mathcal{V}^g_k =
\begin{pmatrix}
X_{1,k}^g \times \phi\left(W_1^g\right) 
\\
\vdots 
\\
X_{n^g, k}^g \times \phi\left(W^g_{n^g}\right)
\end{pmatrix}.
\label{VjLinear}
\end{align}
We can now equivalently express \eqref{AlphaOLS1} as
\begin{align}
\hat{\boldsymbol{\alpha}}^g_j =  \underset{\alpha_{j,1},\ldots,\alpha_{j,p} \in \mathds{R}^{d}}{\text{arg min}} \,\frac{1}{n^g}\left\|\mathbf{X}^g_j - \sum_{k \neq j}  \mathcal{V}^g_k \alpha_{j,k}\right\|_2^2.
\label{AlphaOLS2}
\end{align}
Again, $\hat{\alpha}^g_{j,k}$ is an unbiased and approximately normal for sufficiently large $n^g$, satisfying
\[
\hat{\alpha}^g_{j,k} \sim N\left(\alpha^{g,*}_{j,k}, \Omega^g_{j,k}\right),
\]
where $\Omega^g_{j,k}$ is a positive definite matrix of dimension $d\times d$ (though a closed form expression is available, we omit it here for brevity).

We construct a test of $G^0_{j,k}$ based on $\hat{\alpha}^{\mathrm{I}}_{j,k} - \hat{\alpha}^{\mathrm{II}}_{j,k}$.
Under the null hypothesis, $\hat{\alpha}^{\mathrm{I}}_{j,k} - \hat{\alpha}^{\mathrm{II}}_{j,k}$ follows a normal distribution with mean zero and variance $\Omega^{\mathrm{I}}_{j,k} + \Omega^{\mathrm{II}}_{j,k}$.
Given a consistent estimate $\hat{\Omega}^g_{j,k}$ of $\Omega^g_{j,k}$, we can test $G^0_{j,k}$ using the test statistic
\begin{align*}
S_{j,k} = 
\left(\hat{\alpha}_{j,k}^{\mathrm{I}} - \hat{\alpha}_{j,k}^{\mathrm{II}}\right)^\top
\left(\hat{\Omega}^{\mathrm{I}}_{j,k} + \hat{\Omega}^{\mathrm{II}}_{j,k} \right)^{-1} 
\left(\hat{\alpha}_{j,k}^{\mathrm{I}} - \hat{\alpha}_{j,k}^{\mathrm{II}}\right).
\end{align*}
Under the null, the test statistic follows a chi-squared distribution with $d$ degrees of freedom, and a p-value can therefore be calculated as
\begin{align*}
P\left(\chi^2_{d} > S_{j,k}\right).
\end{align*}

\subsection{Covariate-Adjustment via Neighborhood Selection in High Dimensions}

The methods described in Section 3.1 are only appropriate when the number of nodes $p$ is small relative to the sample size.
Model \eqref{AdjModel1} has $(p-1)d$ parameters, so the least squares estimator of Section 3.1 provides stable estimates as long as $n^{\mathrm{I}}$ and $n^{\mathrm{II}}$ are larger than $(p-1)d$.
However, in the high-dimensional setting, where the the number of parameters exceeds the sample size, the ordinary least squares estimates behave poorly.

To fit the varying coefficient model \eqref{AdjModel1} in the high-dimensional setting, we use a regularized estimator that relies upon an assumption of \textit{sparsity} in the networks.
The sparsity assumption requires that within each group only a small number of nodes are partially correlated, meaning that in \eqref{AdjModel1}, only a few of the vectors $\alpha^{g,*}_{j,k}$ are nonzero.
To leverage the sparsity assumption, we propose to use the group LASSO estimator \citep{yuan2006model}:
\begin{align}
\tilde{\boldsymbol{\alpha}}_j^g = \underset{\alpha_{j,1},\ldots,\alpha_{j,p} \in \mathds{R}^{d}}{\text{arg min}} \,
\frac{1}{n^g}\left\|\mathbf{X}^g_j -  \sum_{k \neq j} \mathcal{V}^g_k \alpha_{j,k} \right\|_2^2 + \lambda \sum_{k \neq j} \left\| \alpha_{j,k} \right\|_2,
\label{GrpLASSO}
\end{align}
where $\lambda > 0$ is a tuning parameter.
The group LASSO provides a sparse estimate and sets some $\tilde{\alpha}_{j,k}$ to be exactly zero, resulting in networks with few edges.
The level of sparsity of $\tilde{\boldsymbol{\alpha}}^g_{j}$ is determined by $\lambda$, with higher $\lambda$ values forcing more $\tilde{\alpha}_{j,k}$ to zero.
We discuss selection of the tuning parameter in Section 5.1.  

Though the group LASSO provides a consistent estimate of $\boldsymbol{\alpha}^{g,*}_j$, the estimate is \textit{not} approximately normally distributed.
The group LASSO estimate of ${\alpha}^{g,*}_{j,k}$ retains a bias that diminishes at the same rate as the standard error.
As a result, the group LASSO estimator has a non-standard sampling distribution that cannot be derived analytically and is therefore unsuitable for hypothesis testing.

We can obtain approximately normal estimates of $\alpha^{g,*}_{j,k}$ by correcting the bias of $\tilde{\alpha}^g_{j,k}$, as was first proposed to obtain normal estimates for the classical $\ell_1$-penalized version of the LASSO \citep{van2014asymptotically, zhang2014confidence}.
These ``de-biased'' or ``de-sparsified'' estimators can been shown to be approximately normal with moderately large samples even in the high-dimensional setting; they are therefore suitable for hypothesis testing.
Our approach is to use a de-biased version of the group LASSO.
Bias correction in group LASSO problems is well studied \citep{van2016estimation, honda2019biased, mitra2016benefit}, so we are able to perform covariate-adjusted inference by applying previously-developed methods.

The bias of the group LASSO estimate can be written as
\begin{align}
\delta^g_{j,k} = \mathds{E}\left[\tilde{\alpha}^g_{j,k}\right] - \alpha^{g,*}_{j,k},
\end{align} 
where $\delta^g_{j,k}$ is a nonzero $d$-dimensional vector (recall $d$ is the dimension of $\alpha^{g,*}_{j,k}$).  
Our approach is to obtain an estimate of the bias $\tilde{\delta}_{j,k}$ and to use a de-biased estimator, defined as
\begin{align}
\check{\alpha}^g_{j,k} = \tilde{\alpha}^g_{j,k} - \tilde{\delta}^g_{j,k}.
\end{align}
For a suitable choice of $\tilde{\delta}_{j,k}$, the bias-corrected estimator is approximately normal for a sufficiently large sample size $n^g$ under mild conditions, i.e.,
\begin{align}
\check{\alpha}^g_{j,k} \sim N\left(\alpha^{g,*}_{j,k}, \Omega^g_{j,k}\right),
\end{align}
where the variance $\Omega^g_{j,k}$ is a positive definite matrix, for which we obtain an estimate $\check{\Omega}^g_{j,k}$.
We provide a derivation for the bias-correction and the form of our variance estimate in Appendix A.

Similar to Section 2.1, we test the null hypothesis $G^0_{j,k}$ in \eqref{NullAdj} using the test statistic
\begin{align}
S_{j,k} = 
\left(\check{\alpha}_{j,k}^{\mathrm{I}} - \check{\alpha}_{j,k}^{\mathrm{II}}\right)^\top
\left(\check{\Omega}_{j,k}^{\mathrm{I}} + \check{\Omega}_{j,k}^{\mathrm{II}} \right)^{-1} 
\left(\check{\alpha}_{j,k}^{\mathrm{I}} - \check{\alpha}_{j,k}^{\mathrm{II}}\right).
\label{TestStat2}
\end{align}
The test statistic asymptotically follows a chi-squared distribution with $d$ degrees of freedom under the null hypothesis.


\section{Covariate-Adjusted Differential Network Analysis Using Score Matching}

In this section, we discuss covariate-adjustment using the score matching framework introduced in Section~2.
We first describe the score matching estimator in greater detail and then specialize the framework to estimation of pairwise exponential family graphical models in the low- and high dimensional settings. As shown later in this section, for exponential family distributions with continuous support, the score matching loss function is a quadratic function of parameters, providing a computationally-efficient framework for estimating graphical models.

%

\subsection{The Score Matching Framework}

We begin by providing a brief summary of the score matching framework \citep{hyvarinen2005estimation, hyvarinen2007some}. 
Let $Z \in \mathcal{Z} \subseteq \mathds{R}^p$ be a random vector generated from a distribution with density function $h^*$.
For any candidate density $h$, we denote the gradient and Laplician of the log-density by
\begin{align*}
\nabla \log {h}(z) =\left\{ \frac{\partial}{\partial z_j}\log h(x) \right\} \in \mathds{R}^p; \quad \quad \Delta \log h(z) = \sum_{j=1}^p \frac{\partial^2}{\partial z_j^2} \log h(z_j).
\end{align*}
The \textit{score matching loss} $L$ is defined as a measure of divergence between a candidate density function $h$ and the true density $h^{*}$:
\begin{align}
L(h) = \int \left\|\nabla \log h(z)  - \nabla \log h^{*}(z)  \right\|_2^2  h^{*}(z) dz = \mathds{E}\left[ \left\|\nabla \log h(Z)  - \nabla \log h^{*}(Z)  \right\|_2^2  \right].
\label{ScoreLoss1}
\end{align}
It is apparent that the score matching loss is minimized when $h = h^*$.
A natural approach to constructing an estimator for $h^*$ would then be to minimize the empirical score matching loss given observations $Z_1,\ldots,Z_n$, defined as
\begin{align*}
L_n(h) = \frac{1}{n} \sum_{i=1}^n \left \| \nabla \log h\left(Z_i\right) - \nabla \log h^*\left(Z_i\right) \right \|_2^2. 
\end{align*}
Because the score matching loss function takes as input the gradient of the log density function, the loss does not depend on the normalizing constant.
This makes score matching appealing when the normalizing constant is intractable.

The empirical loss seemingly depends on prior knowledge of $h^*$.
However, if $h(z)$ and $\|h(z)\|_2$ both tend to zero as $z$ approaches the boundary of $\mathcal{Z}$, a partial integration argument can be used to show that the score matching loss can be expressed as
\begin{align}
L(h) = \int \left\{ \Delta \log h(z) +  \frac{1}{2}\left \| \nabla \log h(z)  \right \|_2^2 \right\} h^*(z)dz + \text{const.},
\label{ScoreLoss2}
\end{align}
where `const.' is a term that does not depend on $h$.
We can therefore estimate $h^*$ by minimizing an empirical version of the score matching loss that does not depend on $h^*$.
We can express the empirical loss as
\begin{align*}
L_n(h) = \frac{1}{n} \sum_{i=1}^n \Delta \log h(Z_i) +  \frac{1}{2}\left \| \nabla \log h(Z_i)  \right \|_2^2.
\end{align*}

The score matching loss is particularly appealing for exponential family distributions with continuous support, as it leads to a quadratic optimization function \citep{lin2016estimation}. 
However, when $Z$ is non-negative, the arguments used to express \eqref{ScoreLoss1}  as \eqref{ScoreLoss2} fail because $h(z)$ and $\|\nabla h(z) \|_2$ do not approach zero at the boundary. 
We can overcome this problem by instead considering the \textit{generalized score matching framework} \citep{yu2019generalized, hyvarinen2007some} as an extension that is suitable for non-negative data.
Let $v_1,\ldots,v_p: \mathds{R}^+ \to \mathds{R}^+$ be positive and differentiable functions, let $v(z) = \left(v_1(z_1),\ldots,v_p(z_p)\right)^\top$, let $\dot{v}_j$ denote the derivative of $v_j$, and let $\circ$ denote the element-wise product operator.
The generalized score matching loss is defined as
\begin{align}
L(h) = \int \left\|\left\{\nabla \log h(z)  - \nabla \log h^{*}(z) \right\} \circ v^{1/2}(z) \right\|_2^2  h^{*}(z) dz,  
\label{GenScoreLoss1}
\end{align}
and is also minimized when $h = h^*$.
As for the original score matching loss \eqref{ScoreLoss1}, the generalized score matching loss seemingly depends on prior knowledge of $h^{*}$.
However, under mild technical conditions on $h$ and $v$ (see Appendix B.1), the loss in \eqref{GenScoreLoss1} can be rewritten as
\begin{align}
L(h) = 
\int\bigg[
\sum_{j=1}^p &\dot{v}_j(z_j)\left\{\frac{\partial \log h(z_j)}{\partial z_j} \right\} +
v_j(z_j)\left\{\frac{\partial^2 \log h(z)}{\partial z^2_j} \right\}  + 
\frac{1}{2}v_j(z_j)\left\{\frac{\partial \log h(z)}{\partial z_j} \right\}^2 
 \bigg]h^*(z)dz.
\label{GenScoreLoss2}
\end{align}
The generalized score matching loss thus no longer depends on $h^*$, and an estimator can be constructed by minimizing the empirical version of \eqref{GenScoreLoss2} with respect to $h$.
To this end, the original generalized score matching estimator considered $v_j(z_j)  = z_j^2$ \citep{hyvarinen2007some}.
In this case, it becomes necessary to estimate high moments of $h^*$, leading to poor performance of the estimator.
It has been shown that by instead taking $v$ as a slowly increasing function, such as $v_j(z_j) = \log(1 + v_j)$, one obtains improved theoretical results and better empirical performance \cite{yu2019generalized}.

\subsection{Covariate-Adjustment in High-Dimensional Exponential Family Models via Score Matching}

In this sub-section, we discuss construction of asymptotically normal estimators for the parameters of the exponential family pairwise interaction model \eqref{ExpFam2} using the generalized score matching framework.
To simplify our presentation, we consider the setting in which we are only interested in studying the connectedness between one node $X^g_{j}$ and all other neighboring nodes in the network.
To this end, it suffices to estimate the conditional density of $X^g_{j}$ given all other nodes and the covariates $W^g$.
A similar approach to the one we describe below can also be used to estimate the entire joint density \eqref{ExpFam2}. 
For simplicity, we assume that in \eqref{ExpFam2}, there exist functions $\psi$ and $\zeta$ such that $\psi = \psi_{j,k}$ for all $(j,k)$ and $\zeta_{j,c} = \zeta$ for all $(j,c)$, and that $\mu_j = 0$.
For $x = (x_1,\ldots,x_p)^\top$ and $w = (w_1,\ldots,w_q)^\top$ the conditional density can thus be expressed as
\begin{align}
f_j^{g,*}(x_j|x_1,\ldots,x_p,w) \propto
\exp\left( \sum_{ j = 1}^p \left\langle \alpha^{g,*}_{j,k}, \phi\left(w\right) \right\rangle\psi(x_j, x_k) + \sum_{c=1}^d \theta_{j,c}^{g,*} \zeta\left(x_j, \phi_c(w)\right)\right),
\label{CondExpFam}
\end{align}
where the density is up to a normalizing constant that does not depend on $x_j$.

We first explicitly define the score matching loss for the conditional density function \eqref{CondExpFam}.
Let $\boldsymbol{\alpha}^{g,*}_j = \left((\alpha^{g,*}_{j,1})^\top, \ldots ,(\alpha^{g,*}_{j,p})^\top\right)^\top$, and similarly let $\boldsymbol{\theta}^{g,*}_j = (\theta^{g,*}_{j,1}, \ldots ,\theta^{g,*}_{j,p})^\top$.
Let $\dot{\psi}$ and $\ddot{\psi}$ denote the first and second derivatives of $\psi$ with respect to $x_j$, and similarly, let $\dot{\zeta}$ and $\ddot{\zeta}$ denote the first and second derivatives of $\zeta$ with respect to $x_j$. We define a non-negative function $v_j: \mathds{R}_+ \to \mathds{R}_+$, and let $\dot{v}_j$ denote the first derivative of $v_j$.
Then for candidate parameters $\boldsymbol{\alpha}_j = \left(\alpha^\top_{j,1},\ldots,\alpha^\top_{j,p}\right)^\top$ and $\boldsymbol{\theta}_j = (\theta_{j,1},\ldots,\theta_{j,d})^\top$, the empirical generalized score matching loss for the conditional density of $X_{j}^g$ given all other nodes and the covariates can be expressed as
\begin{align}
L^g_{n,j}(\boldsymbol{\alpha}, \boldsymbol{\theta}) = \frac{1}{2n^g}&\sum_{i=1}^{n^g}  v_j\left(X_{i,j}^g\right)\bigg\{ \sum_{k  = 1}^p \left\langle \alpha_{j,k}, \phi\left(W^g_{i}\right) \right\rangle\dot{\psi}\left(X^g_{i,j}, X^g_{i,k}\right) + \sum_{c=1}^d \theta_{j,c} \dot{\zeta}\left(X^g_{i,j}, \phi_c\left(W^g_{i}\right)\right) \bigg\}^2 + \nonumber
\\
\frac{1}{n^g}&\sum_{i=1}^{n^g} v_j\left(X^g_{i,j}\right)\bigg\{ \sum_{k = 1}^p \left\langle \alpha_{j,k}, \phi\left(W^g_{i}\right) \right\rangle\ddot{\psi}\left(X^g_{i,j}, X^g_{i,k}\right) + \sum_{c=1}^d \theta_{j,c} \ddot{\zeta}\left(X^g_{i,j}, \phi_c\left(W^g_{i}\right)\right) \bigg\} + \nonumber
\\
\frac{1}{n^g}&\sum_{i=1}^{n^g} \dot{v}_j\left(X^g_{i,j}\right)\bigg\{ \sum_{k = 1}^p \left\langle \alpha_{j,k}, \phi\left(W^g_{i}\right) \right\rangle\dot{\psi}\left(X^g_{i,j}, X^g_{i,k}\right) + \sum_{c=1}^d \theta_{j,c} \dot{\zeta}\left(X^g_{i,j}, \phi_c\left(W^g_{i}\right)\right) \bigg\}.
\label{ExpFamScoreLoss}
\end{align}
The true parameters $\boldsymbol{\alpha}^{g,*}_j$ and $\boldsymbol{\theta}^{g,*}_{j}$ can characterized as the minimizers of the population score matching loss $\mathds{E}\left[L^g_{n,j}\left(\boldsymbol{\alpha}_j, \boldsymbol{\theta}_j\right)\right]$, as discussed in Section 4.1.

The loss function in \eqref{ExpFamScoreLoss} is quadratic in parameters  $\boldsymbol{\alpha}^{g,*}_j$ and $\boldsymbol{\theta}_j^{g,*}$ and can thus be solved efficiently. 
When the sample size $n^g$ is much larger than the number of unknown parameters $(p+1)d$, one can estimate $\boldsymbol{\alpha}^{g,*}_j$ and $\boldsymbol{\theta}_j^{g,*}$ by simply minimizing $L^g_{n,j}$ with respect to the unknown parameters.
The empirical loss function is quadratic in $(\boldsymbol{\alpha}_j, \boldsymbol{\theta}_j)$, so the minimizer of the loss is available in closed form and can be computed efficiently.
Moreover, we can readily establish asymptotic normality of the parameter estimates using results from classical M-estimation theory \citep{van2000asymptotic}.
To avoid including cumbersome notation, we reserve the details for Appendix B.2.

When the sample size is smaller than the number of parameters, the minimizer of $L^g_{n,j}$ is no longer consistent.
Similar to Section 3.2, we use regularization to obtain a consistent estimator in the high-dimensional setting.
We define the $\ell_2$-regularized generalized score matching estimator as
\begin{align}
\left(\tilde{\boldsymbol{\alpha}}^g_j, \tilde{\boldsymbol{\theta}}^g_j \right) = \underset{\boldsymbol{\alpha}_j, \boldsymbol{\theta}_j}{\text{arg min}}\,\,L^g_{n,j}(\boldsymbol{\alpha}_j, \boldsymbol{\theta}_j) + \lambda \sum_{j=1}^p \left\| \alpha_{j,k} \right\|_2,
\label{RegularizedScoreMatching}
\end{align}
where $\lambda > 0$ is a tuning parameter.
Similar to the group LASSO estimator \eqref{GrpLASSO}, the regularization term in \eqref{RegularizedScoreMatching} induces sparsity in the estimate $\tilde{\boldsymbol{\alpha}}_j^g$ and sets some $\tilde{\alpha}^g_{j,k}$ to be exactly zero.
The tuning parameter controls the level of sparsity, where more vectors $\tilde{\alpha}^g_{j,k}$ are zero for higher $\lambda$.
In Appendix B.3, we establish consistency of the regularized score matching estimator assuming sparsity of $\tilde{\boldsymbol{\alpha}}^g_j$ and some additional regularity conditions.

As is the case for the group LASSO estimator, the regularized score matching estimator has an intractable limiting distribution because its bias and standard error diminish at the same rate.
We can obtain an asymptotically normal estimate by subtracting from the initial estimate an estimate of the bias.
In Appendix B.4, we construct such a bias-corrected estimate $\check{\alpha}^g_{j,k}$ that, for sufficiently large $n^g$, satisfies
\begin{align*}
\check{\alpha}^g_{j,k} \sim N\left(\alpha^{g,*}_{j,k}, \Omega^g_{j,k} \right),
\end{align*}
for a positive definite matrix $\Omega^g_{j,k}$.
Given bias-corrected estimates and a consistent estimate $\check{\Omega}^g_{j,k}$ of $\Omega^g_{j,k}$, we can test the null hypothesis \eqref{NullAdj} using the test statistic
\begin{align*}
S_{j,k} = \left( \check{\alpha}^{\mathrm{I}}_{j,k} - \check{\alpha}^{\mathrm{II}}_{j,k} \right)^\top \left(  \check{\Omega}^{\mathrm{I}}_{j,k}  + \check{\Omega}^{\mathrm{II}}_{j,k}  \right)^{-1}\left( \check{\alpha}^{\mathrm{I}}_{j,k} - \check{\alpha}^{\mathrm{II}}_{j,k} \right).
\end{align*}
Under the null hypothesis, the test statistic follows a chi-squared distribution with $d$ degrees of freedom.

\section{Numerical Studies}

In this section, we examine the performance of our proposed test in a simulation study.
We consider the neighborhood selection approach described in Section~3.
Our simulation study has three objectives: (1) to assess the stability of our estimators for the covariate-dependent networks, (2) to examine the effect of sample size on statistical power and type-I error control, and (3) to illustrate that failing to adjust for covariates can in some settings result in poor type-I error control or reduced statistical power.

\subsection{Implementation}

We first discuss implementation of the neighborhood selection approach.
The group LASSO estimate \eqref{GrpLASSO} does not exist in closed form, in contrast to the ordinary least squares estimate \eqref{AlphaOLS2}.
To solve \eqref{GrpLASSO}, we use the efficient algorithm implemented in the publicly available R package \texttt{gglasso} \cite{yang2015fast}.

The group LASSO estimator requires selection of a tuning parameter $\lambda$, which controls the sparsity of the estimate.
We select the tuning parameter  by performing $K$-fold cross-validation, using $K = 10$ folds.
Since the selection of $\lambda$ is sensitive to the scale of the columns of $\mathcal{V}_k^g$ in \eqref{VjLinear}, we scale the columns by their standard deviations prior to cross-validating.
After fitting the group LASSO with the selected tuning parameter, we convert the estimates back to their original scale by dividing the estimates by the standard deviations of the columns of $\mathcal{V}_k^g$.


\subsection{Simulation Setting}

In what follows, we describe our simulation setting.
In short, we generate data from the varying coefficient model \eqref{AdjModel1}, where we treat nodes $1$ through $(p-1)$ as predictors, and treat node $p$ as the response.
We first randomly generate data for nodes $1$ through $(p-1)$ in groups $\mathrm{I}$ and $\mathrm{II}$ from the same multivariate normal distribution.
We then construct $\eta^{g,*}_{j,k}$  and generate data for two covariates $W_i^g = (W_{i,1}^g, W_{i,2}^g)^\top$ so that one covariate acts as a confounding variable, and the other covariate should improve statistical power to detect differential associations after adjustment.

To simulate data for nodes $1$ through $(p-1)$, we first generate a random graph with $(p - 1)$ nodes and an edge density of .05 from a power law distribution with power parameter 5 \citep{newman2003structure}.
Denoting the edge set of the graph by $E$, we generate the $(p-1) \times (p-1)$ matrix $\Theta$ as
\[
\Theta_{j,k} = 
\begin{cases}
0 & (j,k) \notin E
\\
.5 & (j,k) \in E \text{ with 50\% probability}
\\
-.5 & (j,k) \in E \text{ with 50\% probability}
\end{cases},
\]
with $\Theta_{j,k} = \Theta_{k,j}$.
Defining by $a^*$ the smallest eigenvalue of $\Theta$, we set $\Sigma = (\Theta - (a^* - .1)I)^{-1}$, where $I$ is the identity matrix.
We then draw $(X_{i,1}^{g},\ldots,X_{i,p-1}^{g})^\top$ from a multivariate normal distribution with mean zero and covariance $\Sigma$ for $i = 1,\ldots,n^g$ for each group $g$.

We generate $W^{\mathrm{I}}_{i,1}$ from a $\text{Beta}(3/2, 1)$ distribution and $W^{\mathrm{II}}_{i,1}$ from a $\text{Beta}(1, 3/2)$ distribution.
We center and scale both $W^{\mathrm{I}}_{i,1}$ and $W^{\mathrm{II}}_{i,1}$ to the $(-1, 1)$ interval.
We generate $W^{\mathrm{I}}_{i,2}$ and $W_{i,2}^{\mathrm{II}}$ each from a Uniform$(-1, 1)$ distribution.

We consider two different choices for the varying coefficient functions $\eta^{g,*}_{j,k}$:
\begin{itemize}
\item
\textit{Linear Polynomial:}
\begin{align*}
&\eta^\mathrm{I,*}_{p,1}(w_1, w_2) = .5 + .5w_1; &&\eta^{\mathrm{II,*}}_{p,1}(w_1, w_2) = .5 + .5w_1
\\ \nonumber 
&\eta^\mathrm{I,*}_{p,2}(w_1, w_2) = .5 + .25w_2; &&\eta^{\mathrm{II,*}}_{p,2}(w_1, w_2) = .5 + .75w_2
\\ \nonumber 
&\eta^{\mathrm{I,*}}_{p,3}(w_1, w_2) = 0; &&\eta^{\mathrm{II,*}}_{p,3}(w_1, w_2) = .5,
\end{align*}
and $\eta^{g,*}_{p,k} = 0$ for $k \geq 4$.
\item
\textit{Cubic Polynomial:}
\begin{align*}
&\eta^\mathrm{I,*}_{p,1}(w_1, w_2) = .5 + .5\left(w_1 + w_1^2 + w_1^3\right); &&\eta^{\mathrm{II,*}}_{p,1}(w_1, w_2) = .5 + .5\left(w_1 + w_1^2 + w_1^3\right)
\\ \nonumber 
&\eta^\mathrm{I,*}_{p,2}(w_1, w_2) = .5 + .25\left(w_2 + w_2^3\right); &&\eta^{\mathrm{II,*}}_{p,2}(w_1, w_2) = .5 +  .75\left(w_2 + w_2^3\right)
\\ \nonumber 
&\eta^{\mathrm{I,*}}_{p,3}(w_1, w_2) = 0; &&\eta^{\mathrm{II,*}}_{p,3}(w_1, w_2) = .5,
\end{align*}
and $\eta^{g,*}_{p,k} = 0$ for $k \geq 4$.
\end{itemize}
The first covariate $W_{i,1}^g$ confounds the association between nodes $p$ and 1.
The distribution of $W_{i,1}^g$ depends on group membership, and $W_{i,1}^g$ affects the association between genes $p$ and $1$.
However, $\eta^{\mathrm{I},*}_{p,1}(w) = \eta^{\mathrm{II},*}_{p,1}(w)$ for all $w$.
Thus, $G^0_{p,1}$ in \eqref{NullAdj} holds while $H^0_{p,1}$ in \eqref{NullUnadj} fails, as depicted in Figure \ref{fig:ConfAndPower}a.
Failing to adjust for $W_1^g$ should therefore result in an inflated type-I error rate for the hypothesis $G^0_{p,1}$.
Adjusting for the second covariate $W_{i,2}^g$ should improve the power to detect the differential connection between nodes $p$ and $2$.
We have constructed $\eta^{g,*}_{p,2}$ so that $\mathds{E}\left[\eta^{\mathrm{I},*}\left(W^{\mathrm{I}}\right)\right] = \mathds{E}\left[\eta^{\mathrm{II},*}\left(W^{\mathrm{II}}\right)\right]$, though the association between nodes $p$ and 2 depends more strongly on $W^g$ in group $\mathrm{II}$ than in group $\mathrm{I}$.
Thus, $H^0_{p,2}$ holds while $G^0_{p,2}$ fails, as depicted in Figure \ref{fig:ConfAndPower}b. 
The association between nodes $p$ and $3$ does not depend on either covariate, though the association differs by group.
Thus, one should be able to identify a differential connection using either the adjusted or unadjusted test.
Node $p$ is conditionally independent of all other nodes in both groups.

For $i = 1,\ldots,n^g$, we generate $X^g_{i,p}$ as
\[
X^g_{i,p} =\sum_{k \neq p} \eta^g_{j,k}\left(W_i^g\right) X^g_{i,k} + \epsilon^g_i,
\]
where $\epsilon^g_i$ follows a normal distribution with zero mean and unit variance.
We use balanced sample sizes $n^{\mathrm{I}} = n^{\mathrm{II}} = n$ and consider $n \in \{80, 160, 240\}$.
We set the number of nodes $p = 40$. 
The graph for nodes $1$ through $(p-1)$ contains 15 edges.
Leaving $\Sigma$ fixed, we generate 400 random data sets following the above approach.

We consider two choices of the basis expansion $\phi$:
\begin{enumerate}
\item Linear basis: $\phi(w_1, w_2) = \begin{pmatrix}
1 & w_1 & w_2
\end{pmatrix}^\top$;
\item Cubic polynomial basis: $\phi(w_1, w_2) = \begin{pmatrix} 1 & w_1 & w_1^2 & w_1^3 & w_2 & w_2^2 & w_2^3 \end{pmatrix}^\top$.
\end{enumerate}
Using a linear basis, $d = 3$, and model \eqref{AdjModel1} has $117$ parameters.
With the cubic polynomial basis, $d = 7$, and there are $273$ parameters.

We compare our proposed methodology with the approach for differential network analysis without covariate adjustment described in Section 3.1. 
In the unadjusted analysis, ordinary least squares estimation is justified because although $(p-1)d$ is large with respect to $n$, $(p-1)$ is smaller than $n$.


\subsection{Simulation Results}

Figure \ref{fig:l2error} shows the Monte Carlo estimates of the expected $\ell_2$ error for the de-biased group LASSO estimates $\check{\alpha}^g_{p,k}$, $\mathds{E}\left[ \left\|d^{-1}\left(\check{\alpha}^{g}_{p,k} - \alpha^{g,*}_{p,k}\right)\right\|_2\right]$, for $k = 1,\ldots,(p-1)$.
We only report the $\ell_2$ error when the basis $\phi$ is correctly specified for the varying coefficient function $\eta^{g,*}_{p,k}$ --- that is, when $\phi$ is linear basis, and $\eta^{g,*}_{p,k}$ is a linear function or when $\phi$ is a cubic basis, and $\eta^{g,*}_{p,k}$ is a cubic function.
In both the linear and cubic polynomial settings, the average $\ell_2$ estimation error for $\alpha^{g,*}_{p,k}$ decreases with the sample size for all $k$, as expected.
We also find that in small samples, the estimation error is substantially lower in the linear setting than in the cubic setting.
This suggests that estimates are less stable in more complex models.

\begin{figure}[!t]
\centering
 \includegraphics[scale = 1]{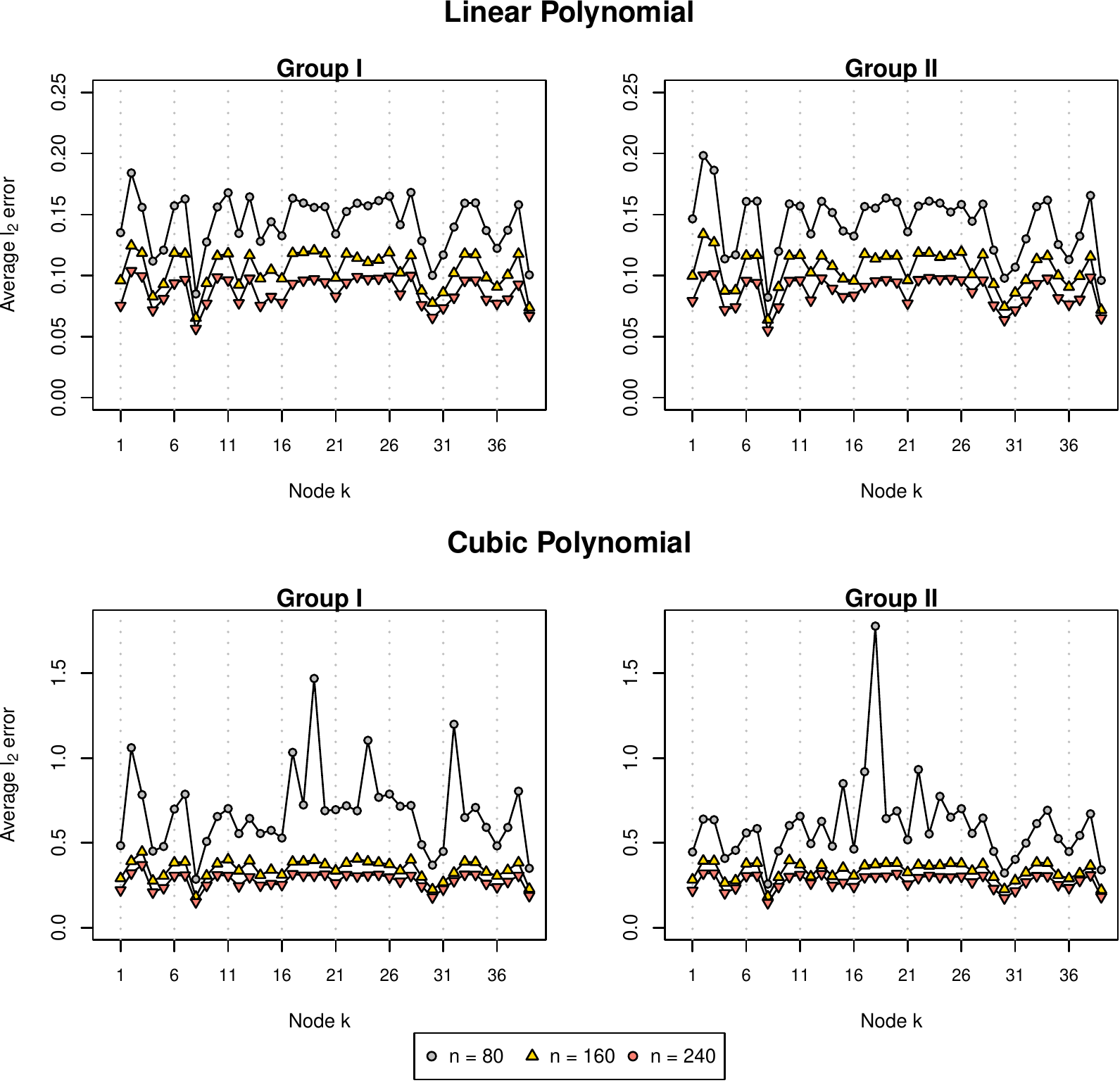}
\caption{Monte Carlo estimates of expected $\ell_2$ error, $\mathds{E}\left[\left\| d^{-1}\left(\check{\alpha}^g_{p,k} - \alpha^{g,*}_{p,k}\right)\right\|_2\right],$ for $k = 1,\ldots,39$. The linear polynomial plots display the $\ell_2$ error when $\eta^{g,*}_{j,k}$ is a linear function, and $\phi$ is a linear basis.  The cubic polynomial plots display the $\ell_2$ error when $\eta^{g,*}_{j,k}$ is a cubic polynomial, and $\phi$ is a cubic basis.}
\label{fig:l2error}
\end{figure}

In Table \ref{table:Sims1}, we report Monte Carlo estimates of the probability of rejecting $G^0_{p,k}$, the null hypothesis that nodes $p$ and $k$ are not differentially connected given $W^g$, for $k = 1$, $k = 2$, $k = 3$, and $k \geq 4$, using both the adjusted and unadjusted tests at the significance level $\kappa = .05$.
As the purpose of the simulation study is to examine the behavior of the edge-wise test, we do not perform a multiple testing correction.

For $k = 1$ (i.e., when $H^0_{p,k}$ fails, but $G^0_{p,k}$ holds), the unadjusted test is anti-conservative, and the probability of falsely rejecting $G^0_{p,k}$ increases with the sample size.
When an adjusted test is performed using a linear basis, and when $\eta^{g,*}_{p,1}$ is linear, the type-I error rate is slightly inflated but appears to approach the nominal level of $.05$ as the sample size increases.
However, when $\eta^{g,*}_{p,1}$ is a cubic function, and the linear basis is mis-specified, the type-I error rate is inflated, though it is still slightly lower than that of unadjusted test.
For both specifications of $\eta^{g,*}_{p,1}$, the covariate-adjusted test controls the type-I error rate near the nominal level when a cubic polynomial basis is used.
For $k = 2$, (i.e., when $H^0_{p,k}$ holds, but $G^0_{p,k}$ fails), the unadjusted test exhibits low power to detect differential associations.
The adjusted test provides greatly improved power when either a linear or cubic basis is used.
For $k = 3$, (i.e., when both $H^0_{p,k}$ and $G^0_{p,k}$ fail), the unadjusted test and both adjusted tests are well-powered against the null.
For $k \geq 4$ (i.e., when genes $p$ and $k$ are conditionally independent in both groups), the unadjusted test and the adjusted test with a linear basis both control the type-I error near the nominal level.
However, the covariate-adjusted test is conservative when a cubic basis is used.

\begin{table}[!h]
\centering
\small
\begin{tabular}{lllllllllll}
  \hline
  & & \multicolumn{3}{c}{Unadjusted} & \multicolumn{3}{c}{Linear Adjustment} & \multicolumn{3}{c}{Cubic Adjustment} \\
 & & $n = 80$ & $n = 160$ & $n = 240$ & $n = 80$ & $n = 160$ & $n = 240$ & $n = 80$ & $n = 160$ & $n = 240$ \\
 \hline
Linear $\eta^{g,*}_{p,k}$ & $k = 1$ & 0.15 & 0.278 & 0.385 & 0.13 & 0.09 & 0.072 & 0.04 & 0.062 & 0.05 \\ 
   & $k = 2$ & 0.042 & 0.078 & 0.045 & 0.27 & 0.532 & 0.73 & 0.08 & 0.27 & 0.52 \\ 
   & $k = 3$ & 0.48 & 0.912 & 0.988 & 0.605 & 0.922 & 0.965 & 0.218 & 0.738 & 0.902 \\ 
   & $k \geq 4$ & 0.052 & 0.054 & 0.053 & 0.045 & 0.048 & 0.048 & 0.009 & 0.017 & 0.025 \\ 
  Cubic $\eta^{g,*}_{p,k}$ & $k = 1$ & 0.21 & 0.505 & 0.668 & 0.358 & 0.315 & 0.342 & 0.07 & 0.055 & 0.07 \\ 
   & $k = 2$ & 0.052 & 0.068 & 0.082 & 0.6 & 0.882 & 0.975 & 0.195 & 0.73 & 0.93 \\ 
   & $k = 3$ & 0.408 & 0.84 & 0.978 & 0.55 & 0.898 & 0.982 & 0.202 & 0.772 & 0.945 \\ 
   & $k \geq 4$ & 0.056 & 0.05 & 0.054 & 0.053 & 0.054 & 0.052 & 0.009 & 0.02 & 0.027 \\ 
   \hline
\end{tabular}
\caption{Monte Carlo estimates of probability of rejecting $G^{0}_{p,k}$, the null hypothesis that nodes $p$ and $k$ are not differentially connected, given $W^g$. 
All tests are performed at the significance level $\kappa = .05$, and no multiple testing correction is performed.} 
\label{table:Sims1}
\end{table}

The simulation results corroborate our expectations and suggest that there are potential benefits to covariate adjustment.
We find that when the sample size is large, the covariate-adjusted test behaves reasonably well with either choice of basis function.
However, in small samples, the covariate-adjusted test is somewhat imprecise, and the type-I error rate can be slightly above or below the nominal level.
Practitioners should therefore exercise caution when using our proposed methodology in very small samples.

\section{Data Example}

Breast cancer classification based on expression of estrogen receptor hormone (ER) is prognostic of clinical outcomes.
Breast cancers can be classified as estrogen receptor positive (ER+) and estrogen receptor negative (ER-), with approximately 70\% of breast cancers being ER+ \citep{lumachi2013treatment}.
In ER+ breast cancer, the cancer cells require estrogen to grow; this has been shown to be associated with positive clinical outcomes, compared with ER- breast cancer \citep{carey2006race}.
Identifying differences between the biological pathways of ER+ and ER- breast cancers can be helpful for understanding the underlying disease mechanisms.

It is has been shown that age is associated with ER status and that age can be associated with gene expression \citep{khan1998estrogen, yang2015synchronized}.
This warrants consideration of age as an adjustment variable in a comparison of gene co-expression networks between ER groups.

We perform an age-adjusted differential analysis of the ER+ and ER- breast cancer networks, using publicly available data from The Cancer Genome Atlas (TCGA) \citep{weinstein2013cancer}.
We obtain clinical measurements and gene expression data from a total of 806 ER+ patients and 237 ER- patients.
We consider the set of $p = 145$ genes in the Kyoto Encyclopedia of Genes and Genomes (KEGG) breast cancer pathway \citep{kanehisa2000kegg}, and adjust for age as our only covariate.
The average age in the ER+ plus group is 59.3 years (SD = 13.3), and the average age in the ER- group is 55.9 years (SD = 12.4).
We use a linear basis for covariate adjustment.
In the ER+ group, the sample size is considerably larger than the number of the parameters, so we can fit the varying coefficient model \eqref{AdjModel1} using ordinary least squares.
We use the de-biased group LASSO to estimate the network for the ER- group because the sample size is smaller than the number of model parameters.
We compare the results from the covariate-adjusted analysis with the unadjusted approach described in Section 3.1.

To assess for differential connectivity between any two nodes $j$ and $k$, we can either treat node $j$ or node $k$ as the response in the varying coefficient model \eqref{AdjModel1}.
We can then test either of the hypotheses $G^0_{j,k}:\alpha^{\mathrm{I},*}_{j,k} = \alpha^{\mathrm{II},*}_{j,k}$ or $G^0_{k,j}:\alpha^{\mathrm{I},*}_{k,j} = \alpha^{\mathrm{II},*}_{k,j}$.
Our approach is to set our p-value for the test for differential connectivity between nodes $j$ and $k$ as the minimum of the p-values for the tests of $G^0_{j,k}$ and $G^0_{k,j}$, though we acknowledge that this strategy is anti-conservative.

Our objective is to identify all pairs of differentially connected genes, so we need to adjust for the fact that we perform a separate hypothesis test for each gene pair.
We account for multiplicity by controlling the false discovery rate at the level $\kappa = .05$ using the Benjamini-Yekutieli method \citep{benjamini2001control}.

\begin{figure}[!h]
\centering
\includegraphics[width=16.5cm]{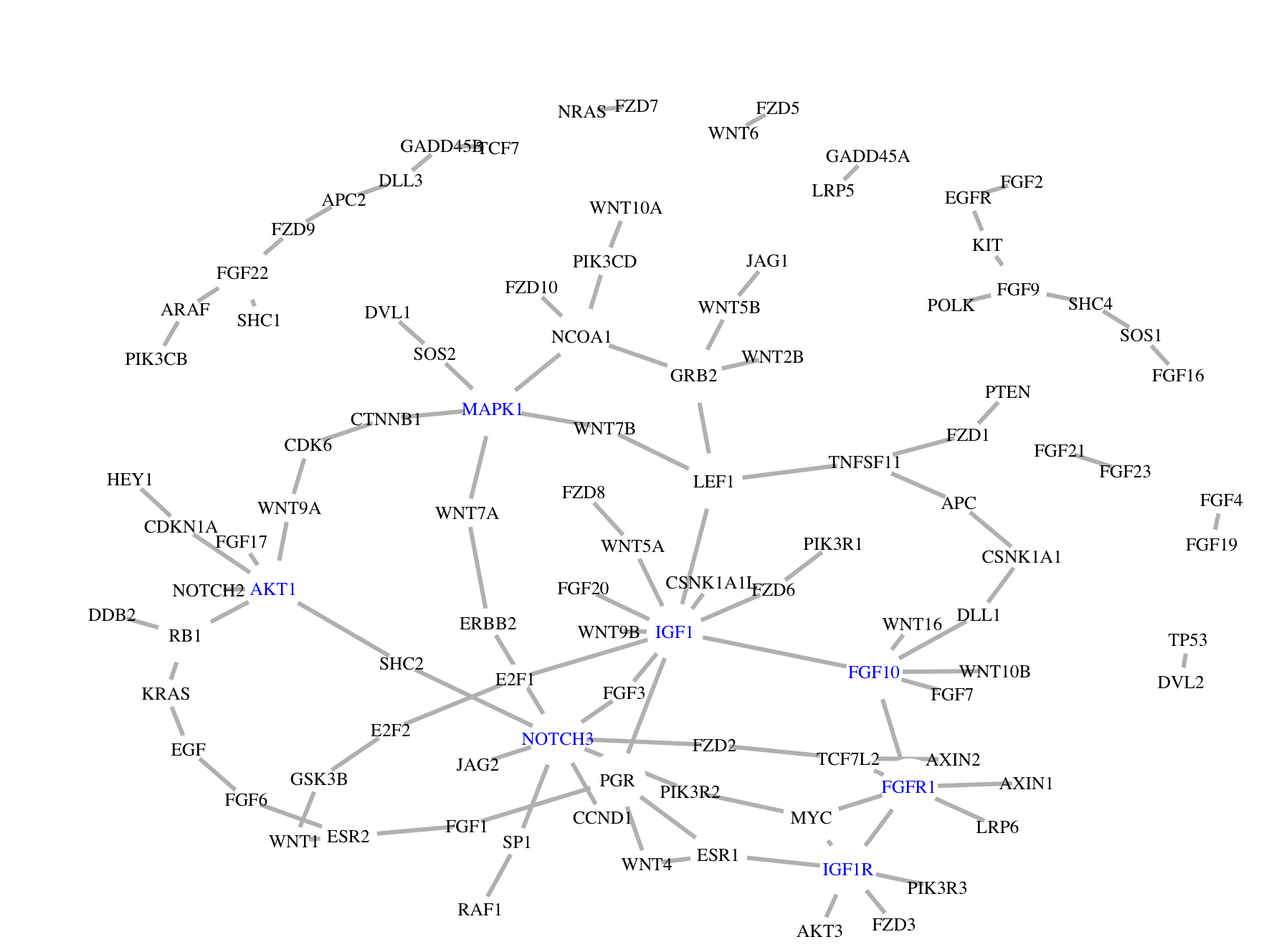}
\caption{Differential breast cancer network by estrogen receptor status from covariate-adjusted analysis.  Nodes with at least five differentially connected neighbors are colored blue. The false discovery rate is controlled at .05.
}
\label{fig:BCNetwork}
\end{figure}

The differential networks obtained from the unadjusted and adjusted analyses are substantially different.
We report 106 differentially connected edges from the adjusted analysis (shown in Figure \ref{fig:BCNetwork}), compared to only two such edges from the unadjusted analysis.
This suggests it is possible that relationship between the gene co-expression network and age differs by ER group.

\section{Discussion}

In this paper, we have addressed challenges that arise when performing differential network analysis \citep{shojaie2020differential} in the setting where the network depends on covariates.
Using both synthetic and real data, we showed that accounting for covariates can result in better control of type-I error and improved power.

We propose a parsimonious approach for covariate adjustment in differential network analysis.
A number of improvements and extensions can be made to our current work.
First, while this paper focuses on differential network analysis in exponential family models, our framework can be applied to other models where conditional dependence between any pair of nodes can be represented by a single scalar parameter.
This includes semi-parametric models such as the nonparanormal model \citep{liu2009nonparanormal}, as well as distributions defined over complex domains, which can be modeled using the generalized score matching framework \citep{yu2021}. 
Additionally, we only discuss testing edge-wise differences between the networks, though testing differences between sub-networks may also be of interest.
When the sub-networks are low-dimensional, one can construct a chi-squared test using similar test statistics as presented in Section 3 and Section 4 because joint asymptotic normality of a low-dimensional set of the estimators $\check{\alpha}^g_{j,k}$ can be readily established.
Such an approach is not applicable to high-dimensional sub-networks, but it may be possible to construct a calibrated test using recent results on simultaneous inference in high-dimensional models \citep{zhang2017simultaneous, yu2020simultaneous}. 
We can also improve the statistical efficiency of the network estimates by considering joint estimation procedures that borrow information across groups \citep{guo2011joint, danaher2014joint, saegusa2016joint}.
Finally, we assume that the relationship between the network and the covariates can be represented by a low-dimensional basis expansion.
Investigating nonparametric approaches that relax this assumption can be a fruitful area of research.

\section*{Funding}

The authors gratefully acknowledge the support of the NSF Graduate Research Fellowship Program under grant DGE-1762114 as well as NSF grant DMS-1561814 and NIH grant R01-GM114029.
Any opinions, findings, and conclusions or recommendations expressed in this material are those of the authors and do not necessarily reflect the views of the funding agencies.

%

\section*{Availability of Data}

This findings of this paper are supported by data from The Cancer Genome Atlas, which are accessible using the publicly available R package \texttt{RTCGA}.

\section*{Code Availability}

An implementation of the proposed methodology is available at \texttt{https://github.com/awhudson/CovDNA}.

\bibliographystyle{spmpsci}      
\bibliography{CovAdjGM-arxiv}   

\begin{thebibliography}{10}
\providecommand{\url}[1]{{#1}}
\providecommand{\urlprefix}{URL }
\expandafter\ifx\csname urlstyle\endcsname\relax
  \providecommand{\doi}[1]{DOI~\discretionary{}{}{}#1}\else
  \providecommand{\doi}{DOI~\discretionary{}{}{}\begingroup
  \urlstyle{rm}\Url}\fi

\bibitem{barabasi2011network}
Barab{\'a}si, A.L., Gulbahce, N., Loscalzo, J.: Network medicine: a
  network-based approach to human disease.
\newblock Nature Reviews Genetics \textbf{12}(1), 56--68 (2011)

\bibitem{belilovsky2016testing}
Belilovsky, E., Varoquaux, G., Blaschko, M.B.: Testing for differences in
  {G}aussian graphical models: Applications to brain connectivity.
\newblock In: Advances in Neural Information Processing Systems, vol.~29.
  Curran Associates, Inc. (2016)

\bibitem{benjamini2001control}
Benjamini, Y., Yekutieli, D.: The control of the false discovery rate in
  multiple testing under dependency.
\newblock Annals of Statistics pp. 1165--1188 (2001)

\bibitem{breheny2009penalized}
Breheny, P., Huang, J.: Penalized methods for bi-level variable selection.
\newblock Statistics and its Interface \textbf{2}(3), 369 (2009)

\bibitem{buhlmann2011statistics}
B{\"u}hlmann, P., van~de Geer, S.: Statistics for high-dimensional data:
  methods, theory and applications.
\newblock Springer Science \& Business Media (2011)

\bibitem{carey2006race}
Carey, L.A., Perou, C.M., Livasy, C.A., Dressler, L.G., Cowan, D., Conway, K.,
  Karaca, G., Troester, M.A., Tse, C.K., Edmiston, S., et~al.: Race, breast
  cancer subtypes, and survival in the carolina breast cancer study.
\newblock Journal of the American Medical Association \textbf{295}(21),
  2492--2502 (2006)

\bibitem{chen2015selection}
Chen, S., Witten, D.M., Shojaie, A.: Selection and estimation for mixed
  graphical models.
\newblock Biometrika \textbf{102}(1), 47--64 (2015)

\bibitem{danaher2014joint}
Danaher, P., Wang, P., Witten, D.M.: The joint graphical lasso for inverse
  covariance estimation across multiple classes.
\newblock Journal of the Royal Statistical Society: Series B \textbf{76}(2),
  373--397 (2014)

\bibitem{friedman2008sparse}
Friedman, J., Hastie, T., Tibshirani, R.: Sparse inverse covariance estimation
  with the graphical lasso.
\newblock Biostatistics \textbf{9}(3), 432--441 (2008)

\bibitem{de2010differential}
de~la Fuente, A.: From ‘differential expression’to ‘differential
  networking’--identification of dysfunctional regulatory networks in
  diseases.
\newblock Trends in Genetics \textbf{26}(7), 326--333 (2010)

\bibitem{van2016estimation}
van~de Geer, S.: Estimation and testing under sparsity.
\newblock Lecture Notes in Mathematics \textbf{2159} (2016)

\bibitem{van2014asymptotically}
van~de Geer, S., B{\"u}hlmann, P., Ritov, Y., Dezeure, R.: On asymptotically
  optimal confidence regions and tests for high-dimensional models.
\newblock The Annals of Statistics \textbf{42}(3), 1166--1202 (2014)

\bibitem{guo2011joint}
Guo, J., Levina, E., Michailidis, G., Zhu, J.: Joint estimation of multiple
  graphical models.
\newblock Biometrika \textbf{98}(1), 1--15 (2011)

\bibitem{hastie1993varying}
Hastie, T., Tibshirani, R.: Varying-coefficient models.
\newblock Journal of the Royal Statistical Society: Series B \textbf{55}(4),
  757--779 (1993)

\bibitem{he2019statistical}
He, H., Cao, S., Zhang, J.g., Shen, H., Wang, Y.P., Deng, H.: A statistical
  test for differential network analysis based on inference of {G}aussian
  graphical model.
\newblock Scientific Reports \textbf{9}(1), 1--8 (2019)

\bibitem{honda2019biased}
Honda, T.: The de-biased group lasso estimation for varying coefficient models.
\newblock Annals of the Institute of Statistical Mathematics pp. 1--27 (2019)

\bibitem{hyvarinen2005estimation}
Hyv{{\"a}}rinen, A.: Estimation of non-normalized statistical models by score
  matching.
\newblock Journal of Machine Learning Research \textbf{6}(24), 695--709 (2005)

\bibitem{hyvarinen2007some}
Hyv{\"a}rinen, A.: Some extensions of score matching.
\newblock Computational Statistics \& Data Analysis \textbf{51}(5), 2499--2512
  (2007)

\bibitem{ideker2012differential}
Ideker, T., Krogan, N.J.: Differential network biology.
\newblock Molecular Systems Biology \textbf{8}(1) (2012)

\bibitem{javanmard2014confidence}
Javanmard, A., Montanari, A.: Confidence intervals and hypothesis testing for
  high-dimensional regression.
\newblock Journal of Machine Learning Research \textbf{15}(1), 2869--2909
  (2014)

\bibitem{kanehisa2000kegg}
Kanehisa, M., Goto, S.: Kegg: {K}yoto encyclopedia of genes and genomes.
\newblock Nucleic Acids Research \textbf{28}(1), 27--30 (2000)

\bibitem{khan1998estrogen}
Khan, S.A., Rogers, M.A., Khurana, K.K., Meguid, M.M., Numann, P.J.: Estrogen
  receptor expression in benign breast epithelium and breast cancer risk.
\newblock Journal of the National Cancer Institute \textbf{90}(1), 37--42
  (1998)

\bibitem{lin2016estimation}
Lin, L., Drton, M., Shojaie, A.: Estimation of high-dimensional graphical
  models using regularized score matching.
\newblock Electronic Journal of Statistics \textbf{10}(1), 806--854 (2016)

\bibitem{liu2009nonparanormal}
Liu, H., Lafferty, J., Wasserman, L.: The nonparanormal: Semiparametric
  estimation of high dimensional undirected graphs.
\newblock Journal of Machine Learning Research \textbf{10}(80), 2295--2328
  (2009)

\bibitem{lumachi2013treatment}
Lumachi, F., Brunello, A., Maruzzo, M., Basso, U., Mm~Basso, S.: Treatment of
  estrogen receptor-positive breast cancer.
\newblock Current Medicinal Chemistry \textbf{20}(5), 596--604 (2013)

\bibitem{maathuis2018handbook}
Maathuis, M., Drton, M., Lauritzen, S., Wainwright, M.: Handbook of graphical
  models.
\newblock CRC Press (2018)

\bibitem{meinshausen2006high}
Meinshausen, N., B{\"u}hlmann, P.: High-dimensional graphs and variable
  selection with the lasso.
\newblock The Annals of Statistics \textbf{34}(3), 1436--1462 (2006)

\bibitem{mitra2016benefit}
Mitra, R., Zhang, C.H.: The benefit of group sparsity in group inference with
  de-biased scaled group lasso.
\newblock Electronic Journal of Statistics \textbf{10}(2), 1829--1873 (2016)

\bibitem{negahban2012unified}
Negahban, S.N., Ravikumar, P., Wainwright, M.J., Yu, B.: A unified framework
  for high-dimensional analysis of $ m $-estimators with decomposable
  regularizers.
\newblock Statistical Science \textbf{27}(4), 538--557 (2012)

\bibitem{newman2003structure}
Newman, M.E.: The structure and function of complex networks.
\newblock SIAM Review \textbf{45}(2), 167--256 (2003)

\bibitem{saegusa2016joint}
Saegusa, T., Shojaie, A.: Joint estimation of precision matrices in
  heterogeneous populations.
\newblock Electronic Journal of Statistics \textbf{10}(2), 1341--1392 (2016)

\bibitem{shojaie2020differential}
Shojaie, A.: Differential network analysis: A statistical perspective.
\newblock Wiley Interdisciplinary Reviews: Computational Statistics p. e1508
  (2020)

\bibitem{tibshirani1996regression}
Tibshirani, R.: Regression shrinkage and selection via the lasso.
\newblock Journal of the Royal Statistical Society: Series B \textbf{58}(1),
  267--288 (1996)

\bibitem{van2000asymptotic}
van~der Vaart, A.W.: Asymptotic statistics, vol.~3.
\newblock Cambridge University Press (2000)

\bibitem{wang2009shrinkage}
Wang, H., Xia, Y.: Shrinkage estimation of the varying coefficient model.
\newblock Journal of the American Statistical Association \textbf{104}(486),
  747--757 (2009)

\bibitem{wang2014inference}
Wang, J., Kolar, M.: Inference for sparse conditional precision matrices.
\newblock arXiv preprint arXiv:1412.7638  (2014)

\bibitem{weinstein2013cancer}
Weinstein, J.N., Collisson, E.A., Mills, G.B., Shaw, K.R.M., Ozenberger, B.A.,
  Ellrott, K., Shmulevich, I., Sander, C., Stuart, J.M.: The cancer genome
  atlas pan-cancer analysis project.
\newblock Nature Genetics \textbf{45}(10), 1113--1120 (2013)

\bibitem{xia2015testing}
Xia, Y., Cai, T., Cai, T.T.: Testing differential networks with applications to
  the detection of gene-gene interactions.
\newblock Biometrika \textbf{102}(2), 247--266 (2015)

\bibitem{xia2018two}
Xia, Y., Cai, T., Cai, T.T.: Two-sample tests for high-dimensional linear
  regression with an application to detecting interactions.
\newblock Statistica Sinica \textbf{28}, 63--92 (2018)

\bibitem{yang2015graphical}
Yang, E., Ravikumar, P., Allen, G.I., Liu, Z.: Graphical models via univariate
  exponential family distributions.
\newblock Journal of Machine Learning Research \textbf{16}(1), 3813--3847
  (2015)

\bibitem{yang2015synchronized}
Yang, J., Huang, T., Petralia, F., Long, Q., Zhang, B., Argmann, C., Zhao, Y.,
  Mobbs, C.V., Schadt, E.E., Zhu, J., et~al.: Synchronized age-related gene
  expression changes across multiple tissues in human and the link to complex
  diseases.
\newblock Scientific Reports \textbf{5}(1), 1--16 (2015)

\bibitem{yang2015fast}
Yang, Y., Zou, H.: A fast unified algorithm for solving group-lasso penalize
  learning problems.
\newblock Statistics and Computing \textbf{25}(6), 1129--1141 (2015)

\bibitem{yu2020simultaneous}
Yu, M., Gupta, V., Kolar, M.: Simultaneous inference for pairwise graphical
  models with generalized score matching.
\newblock Journal of Machine Learning Research \textbf{21}(91), 1--51 (2020)

\bibitem{yu2019generalized}
Yu, S., Drton, M., Shojaie, A.: Generalized score matching for non-negative
  data.
\newblock Journal of Machine Learning Research \textbf{20}(76), 1--70 (2019)

\bibitem{yu2021}
Yu, S., Drton, M., Shojaie, A.: {Generalized score matching for general
  domains}.
\newblock Information and Inference: A Journal of the IMA  (2021)

\bibitem{yuan2006model}
Yuan, M., Lin, Y.: Model selection and estimation in regression with grouped
  variables.
\newblock Journal of the Royal Statistical Society: Series B \textbf{68}(1),
  49--67 (2006)

\bibitem{zhang2014confidence}
Zhang, C.H., Zhang, S.S.: Confidence intervals for low dimensional parameters
  in high dimensional linear models.
\newblock Journal of the Royal Statistical Society: Series B \textbf{76}(1),
  217--242 (2014)

\bibitem{zhang2017simultaneous}
Zhang, X., Cheng, G.: Simultaneous inference for high-dimensional linear
  models.
\newblock Journal of the American Statistical Association \textbf{112}(518),
  757--768 (2017)

\bibitem{zhao2014direct}
Zhao, S.D., Cai, T.T., Li, H.: Direct estimation of differential networks.
\newblock Biometrika \textbf{101}(2), 253--268 (2014)

\bibitem{zhou2010time}
Zhou, S., Lafferty, J., Wasserman, L.: Time varying undirected graphs.
\newblock Machine Learning \textbf{80}(2), 295--319 (2010)

\bibitem{zou2005regularization}
Zou, H., Hastie, T.: Regularization and variable selection via the elastic net.
\newblock Journal of the Royal Statistical Society: Series B \textbf{67}(2),
  301--320 (2005)

\end{thebibliography}


\newpage

\section*{Appendix}

\subsection*{A \quad De-biased Group LASSO Estimator}

In this subsection, we derive a de-biased group LASSO estimator.
Our construction is essentially the same as the one presented in van de Geer \citep{van2016estimation}.

With $\mathcal{V}_j$ as defined in \eqref{VjLinear}, let $\mathcal{V}_{-j}^g = \left(\mathcal{V}^g_{1},\cdots,\mathcal{V}^g_{j-1}, \mathcal{V}^g_{j+1},\cdots, \mathcal{V}^g_p\right)$ be an $n \times (p-1)d$ dimensional matrix.
For $\alpha_{j,1}, \ldots, \alpha_{j,p} \in \mathds{R}^d$, let $\boldsymbol{\alpha}_j = \left(\alpha_{j,1}^\top, \cdots, \alpha_{j,p}^\top\right)^\top$, let $\mathcal{P}_j\left( \boldsymbol{\alpha}_j \right) = \sum_{k \neq j} \left\| \alpha_{j,k} \right\|_2$, and let $\nabla \mathcal{P}_j$ denote the sub-gradient of $\mathcal{P}_j$.
We can express the sub-gradient as $\nabla\mathcal{P}_j(\boldsymbol{\alpha}_j) = \left((\nabla \|\alpha_{j,1}\|_2)^\top, \cdots, (\nabla\|\alpha_{j,p}\|_2)^\top \right)^\top$ where  $\nabla \|{\alpha}_{j,k}\|_2 = \alpha_{j,k}/\|\alpha_{j,k}\|_2$ if $\|\alpha_{j,k}\|_2 \neq 0$, and $\nabla \|{\alpha}_{j,k}\|_2$ is otherwise a vector with $\ell_2$ norm less than one.
The KKT conditions for the group LASSO imply that the estimate $\tilde{\boldsymbol{\alpha}}^g_j$ satisfies
\begin{align*}
\left(n^g\right)^{-1}\left(\mathcal{V}_{-j}^g\right)^\top \left( \mathbf{X}_j^g - \mathcal{V}_{-j}^g \tilde{\boldsymbol{\alpha}}^g_j \right) = -\lambda \nabla \mathcal{P}_j\left(\tilde{\boldsymbol{\alpha}}^g_j \right).
\end{align*}
With some algebra, we can rewrite this as
\begin{align*}
\left(n^g\right)^{-1}\left(\mathcal{V}_{-j}^g\right)^\top \mathcal{V}_{-j}^g \left(\tilde{\boldsymbol{\alpha}}_j^g- \boldsymbol{\alpha}^{g,*}_j\right) = -\lambda \nabla \mathcal{P}_j\left(\tilde{\boldsymbol{\alpha}}^g_j \right) + \left(\mathcal{V}^g_{-j}\right)^\top \left(\mathbf{X}_j^g - \mathcal{V}_{-j}^{g} \boldsymbol{\alpha}^{g,*}_j \right).
\end{align*}
Let $\Sigma_j$ be defined as the matrix
\begin{align*}
\Sigma_j = \mathds{E}\left[\left(n^g\right)^{-1}\left(\mathcal{V}_{-j}^g\right)^\top \mathcal{V}_{-j}^g\right],
\end{align*}
and let $\tilde{M}_j$ be an estimate of $\Sigma_j^{-1}$. 
We can write $\left(\tilde{\boldsymbol{\alpha}}^g_j - \tilde{\boldsymbol{\alpha}}^{g,*}_j\right)$ as
\begin{align}
\left(\tilde{\boldsymbol{\alpha}}^g_j - \boldsymbol{\alpha}_j^{g,*} \right) = &\underset{\mathrm{(i)}}{\underbrace{-\lambda \tilde{M}_j\nabla \mathcal{P}_j\left(\tilde{\boldsymbol{\alpha}}^g_j \right)}} + \underset{\mathrm{(ii)}}{\underbrace{\left(n^g\right)^{-1}\tilde{M}_j\left(\mathcal{V}^g_{-j}\right)^\top \left(\mathbf{X}_j^g - \mathcal{V}_{-j}^{g} \boldsymbol{\alpha}^{g,*}_j \right)}} + \nonumber
\\
&\underset{\mathrm{(iii)}}{\underbrace{\left\{I - \left(n^g\right)^{-1}\tilde{M}_j\left(\mathcal{V}_{-j}^g\right)^\top \mathcal{V}_{-j}^g \right\} \left(\tilde{\boldsymbol{\alpha}}_j^g - \boldsymbol{\alpha}^{g,*}_j\right)}}.
\label{DeBiasExpLM}
\end{align}
The first term $(\mathrm{i})$ in \eqref{DeBiasExpLM} is an approximation for the bias of the group LASSO estimate.
This term is a function only of the observed data and not of any unknown quantities.
This term can  therefore be directly added to the initial estimate $\tilde{\boldsymbol{\alpha}}_j^g$.
If $\tilde{M}_j$ is a consistent estimate of $\Sigma_j^{-1}$, the second term $(\mathrm{ii})$ is asymptotically equivalent to
\begin{align*}
\Sigma^{-1}_j
\left(\mathcal{V}^g_{-j}\right)^\top \left(\mathbf{X}_j^g - \mathcal{V}_{-j}^{g} \boldsymbol{\alpha}^{g,*}_j \right).
\end{align*}
Thus, $(\mathrm{ii})$ is asymptotically equivalent to a sample average of mean zero \textit{i.i.d.} random variables.
The central limit theorem can then be applied to establish convergence in distribution to the multivariate normal distribution at an $n^{1/2}$  rate for any low-dimensional sub-vector.
The third term will also be asymptotically negligible if $\tilde{M}_j$ is an approximate inverse of $(n^g)^{-1}\left(\mathcal{V}_{-j}^g\right)^\top\mathcal{V}^g_{-j}$.
This would suggest that an estimator of the form
\begin{align*}
\check{\boldsymbol{\alpha}}_j^g = \tilde{\boldsymbol{\alpha}}_j^g + \lambda \tilde{M}_j \nabla \mathcal{P}_j\left( \tilde{\boldsymbol{\alpha}}^g_j \right)
\end{align*}
will be asymptotically normal for an appropriate choice of $\tilde{M}_j$.

Before describing our construction of $\tilde{M}_j$, we find it helpful to consider an alternative expression for $\Sigma^{-1}_j$.
We define the $d \times d$ matrices $\Gamma^*_{j,k,l}$ as
\begin{align}
&\Gamma^*_{j,k,1},\ldots,\Gamma^*_{j,k,p}  =  \underset{\Gamma_1,\ldots,\Gamma_p \in \mathds{R}^{d\times d}}{\text{arg min}}
\mathds{E}\left[
\text{trace}\left\{ 
\left(n^g\right)^{-1}
\left(\mathcal{V}^g_k - \sum_{l \neq k,j} \mathcal{V}^g_{l} \Gamma_l \right)^\top
\left(\mathcal{V}^g_k - \sum_{l \neq k,j} \mathcal{V}^g_{l} \Gamma_l \right)
\right\}\right].
\label{Ugh}
\end{align}
We also define the $d \times d$ matrix $\tilde{C}_{j,k}$ as
\begin{align*}
C^*_{j,k} = \mathds{E}\left[\left(n^g\right)^{-1} \left(\mathcal{V}_k^g -
  \sum_{l \neq k,j} \mathcal{V}^g_{l}\Gamma^*_{j,k,l}\right)^\top\mathcal{V}_k^g\right].
\end{align*}
It can be shown that $\Sigma^{-1}_j$ can be expressed as
\begin{align*}
\Sigma^{-1}_j =
\begin{pmatrix}
\left(C_{j,1}^*\right)^{-1} & \cdots & \mathbf{0}
\\
\vdots & \ddots & \vdots
\\
\mathbf{0} & \cdots & \left(C_{j,p}^*\right)^{-1}
\end{pmatrix}
\begin{pmatrix}
I & -\Gamma^{*}_{j,1,2} & \cdots & -\Gamma^*_{j,1,p} 
\\
-\Gamma^*_{j,2,1} & I & \cdots & -\Gamma^*_{j,2,p}
\\
\vdots & \vdots & \ddots & \vdots
\\
-\Gamma^*_{j,p,1} & -\Gamma^*_{j,p,2} & \cdots & I
\end{pmatrix}
.
\end{align*}
We can thus estimate $\Sigma_j^{-1}$ by performing a series of regressions to estimate each matrix $\Gamma^*_{j,k,l}$.

Following the approach of van de Geer et al. \cite{van2014asymptotically}, we use a group LASSO variant of the nodewise LASSO to construct $\tilde{M}_j$.
To proceed, we require some additional notation.
For any $d \times d$ matrix $\Gamma = (\gamma_1, \cdots, \gamma_d)$ for $d-$dimensional vectors $\gamma_c$, let $\|\Gamma \|_{2,*} = \sum_{c = 1}^d \|\gamma_c\|_2$.
Let $\nabla\| \Gamma \|_{2,*} = \left( \gamma_1/\|\gamma_1\|_2,\cdots,\gamma_d/ \|\gamma_d\|_2 \right)$ be the subgradient of $\|\Gamma \|_{2,*}$.
We use the group LASSO to obtain estimates $\tilde{\Gamma}_{j,k,l}$ of $\Gamma^*_{j,k,l}$:
\begin{align}
&\tilde{\Gamma}_{j,k,1},\ldots,\tilde{\Gamma}_{j,k,p}  =  \nonumber
\\
&\underset{\Gamma_1,\ldots,\Gamma_p \in \mathds{R}^{d\times d}}{\text{arg min}}
\text{trace}\left\{ 
\left(n^g\right)^{-1}
\left(\mathcal{V}^g_k - \sum_{l \neq k,j} \mathcal{V}^g_{l} \Gamma_l \right)^\top
\left(\mathcal{V}^g_k - \sum_{l \neq k,j} \mathcal{V}^g_{l} \Gamma_l \right)
\right\}
 +
 \omega \sum_{l \neq k,j} \|\Gamma_l \|_{2,*}.
\label{Ugh}
\end{align}
We then estimate $C^*_{j,k}$ as
\begin{align*}
\tilde{C}_{j,k} = \left(n^g\right)^{-1}
 \left(\mathcal{V}^g_k - \sum_{l \neq k,j} \mathcal{V}^g_{l} \tilde{\Gamma}_{j,k,l} \right)^\top\left(\mathcal{V}_k^g\right).
\end{align*}
Our estimate $\tilde{M}_j$ takes the form
\begin{align*}
\tilde{M}_j =
\begin{pmatrix}
\tilde{C}^{-1}_{j,1} & \cdots & \mathbf{0}
\\
\vdots & \ddots & \vdots
\\
\mathbf{0} & \cdots & \tilde{C}^{-1}_{j,p} 
\end{pmatrix}
\begin{pmatrix}
I & -\tilde{\Gamma}_{j,1,2} & \cdots & -\tilde{\Gamma}_{j,1,p} 
\\
-\tilde{\Gamma}_{j,2,1} & I & \cdots & -\tilde{\Gamma}_{j,2,p}
\\
\vdots & \vdots & \ddots & \vdots
\\
-\tilde{\Gamma}_{j,p,1} & -\tilde{\Gamma}_{j,p,2} & \cdots & I
\end{pmatrix}
.
\end{align*}

With this construction of $\tilde{M}_j$, we can establish a bound on the remainder term $(\mathrm{iii})$ in \eqref{DeBiasExpLM}.
To show this, we make use of the following lemma, which states a special case of the dual norm inequality for the group LASSO norm $\mathcal{P}_j$ (see, e.g., Chapter 6 of van de Geer \citep{van2016estimation}).
\begin{lemma}
Let $a_1,\ldots,a_p$ and $b_1,\ldots,b_p$ be $d$-dimensional vectors, and let $\mathbf{a} = \left(a_1^\top,\cdots,a_p^\top\right)^\top$ and $\mathbf{b} = \left(b_1^\top,\cdots,b_p^\top\right)^\top$ be $pd$-dimensional vectors.
Then
\begin{align*}
\langle \mathbf{a}, \mathbf{b}\rangle \leq 
\left(\sum_{j=1}^p \|a_j\|_2 \right) \max_{j} \left\| b_j  \right\|_2.
\end{align*}
\end{lemma}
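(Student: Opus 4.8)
The plan is to exploit the block structure of $\mathbf{a}$ and $\mathbf{b}$ and reduce the claim to a block-wise Cauchy--Schwarz estimate followed by a crude uniform bound. Since both $\mathbf{a}$ and $\mathbf{b}$ are formed by stacking the $p$ sub-vectors $a_j$ and $b_j$, the standard Euclidean inner product on $\mathds{R}^{pd}$ splits along the blocks as
\begin{align*}
\langle \mathbf{a}, \mathbf{b} \rangle = \sum_{j=1}^p \langle a_j, b_j \rangle.
\end{align*}
This decomposition is the only structural observation needed; everything else is a two-line estimate.

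First I would bound each summand using the Cauchy--Schwarz inequality in $\mathds{R}^d$, which gives $\langle a_j, b_j \rangle \leq \|a_j\|_2 \|b_j\|_2$ for every $j$. Next, I would replace each factor $\|b_j\|_2$ by its uniform upper bound $\max_{j} \|b_j\|_2$ and factor that constant out of the sum, yielding
\begin{align*}
\langle \mathbf{a}, \mathbf{b} \rangle \leq \sum_{j=1}^p \|a_j\|_2 \|b_j\|_2 \leq \left( \max_{j} \|b_j\|_2 \right) \sum_{j=1}^p \|a_j\|_2,
\end{align*}
which is precisely the asserted inequality. Conceptually, this is nothing more than the statement that the group-$\ell_\infty$ norm $\max_j \|b_j\|_2$ is the dual norm of the group-$\ell_1$ norm $\sum_j \|a_j\|_2$ that defines the penalty $\mathcal{P}_j$; the inequality is the associated generalized H\"older pairing specialized to these two norms.

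I do not anticipate any genuine obstacle here, as the result is elementary and self-contained once the block decomposition is in hand. The only point requiring a modicum of care is ensuring that the application of Cauchy--Schwarz is carried out within each $d$-dimensional block rather than across the full $pd$-dimensional vectors, since applying it globally would instead produce the weaker Euclidean bound $\|\mathbf{a}\|_2 \|\mathbf{b}\|_2$ and would not reveal the group-norm duality that the subsequent de-biasing argument relies upon. With the block-wise application in place, passing to the maximum over blocks and factoring it out completes the argument with no further subtlety.
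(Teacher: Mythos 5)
Your proof is correct and is exactly the standard argument behind the dual-norm inequality that the paper invokes: the paper itself gives no proof, merely citing Chapter 6 of van de Geer, and your blockwise decomposition followed by Cauchy--Schwarz in each $\mathds{R}^d$ block and a uniform bound on $\|b_j\|_2$ is precisely that cited duality between the group-$\ell_1$ norm $\sum_j \|a_j\|_2$ and the group-$\ell_\infty$ norm $\max_j \|b_j\|_2$. Your cautionary remark about applying Cauchy--Schwarz within blocks rather than globally is also apt, since the global bound $\|\mathbf{a}\|_2\|\mathbf{b}\|_2$ would not yield the group-norm pairing needed to control the remainder term $(\mathrm{iii})$ in the de-biasing expansion.
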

\noindent The KKT conditions for \eqref{Ugh} imply that for all $l \neq j,k$
\begin{align}
\left(n^g\right)^{-1}\left(\mathcal{V}^g_l\right)^\top\left(\mathcal{V}^g_k - \sum_{r \neq k,j} \mathcal{V}^g_{r} \tilde{\Gamma}_{j,k,r}\right) = -\omega \nabla \left\| \tilde{\Gamma}_{j,k,l} \right\|_{2,*}.
\label{NodewiseKKT}
\end{align}
\noindent Lemma 1 and \eqref{NodewiseKKT} imply that
\begin{align*}
\left\| \begin{pmatrix}
\tilde{C}_{j,1} & \cdots & \mathbf{0}
\\
\vdots & \ddots & \vdots
\\
\mathbf{0} & \cdots & \tilde{C}_{j,p} 
\end{pmatrix}
\left\{I - \left(n^g\right)^{-1}\tilde{M}_j\left(\mathcal{V}_{-j}^g\right)^\top \mathcal{V}_{-j}^g \right\} \left(\tilde{\boldsymbol{\alpha}}_j^g - \boldsymbol{\alpha}^{g,*}_j\right)
 \right\|_{\infty} 
 \leq 
 \omega \mathcal{P}_{j}\left(\tilde{\boldsymbol{\alpha}}_j^g - \boldsymbol{\alpha}^{g,*}_j\right),
\end{align*}
where $\|\cdot\|_{\infty}$ is the $\ell_\infty$ norm.
With $\omega \asymp \left\{\log(p)/n\right\}^{1/2}$, $\tilde{M}_j$ can be shown to be consistent under sparsity of $\Gamma^{*}_{j,k,l}$ (i.e., only a few matrices $\Gamma^*_{j,k,l}$ have some nonzero columns) and some additional regularity conditions.
Additionally, it can be shown under sparsity of $\boldsymbol{\alpha}^{g,*}$ (i.e., very few vectors $\alpha^{g,*}_{j,k}$ are nonzero) and some additional regularity conditions that $\mathcal{P}_j\left(\tilde{\boldsymbol{\alpha}}_j^g - \boldsymbol{\alpha}_j^{g,*} \right) = O_P\left(\left\{\log(p)/n \right\}^{1/2}\right)$.
Thus, a scaled version of the remainder term ($\textrm{iii}$) is $o_P(n^{-1/2})$ if $n^{-1/2}\log(p) \to 0$.
We refer readers to Chapter 8 of B\"uhlmann and van de Geer \citep{buhlmann2011statistics} for a more comprehensive discussion of assumptions required for consistency of the group LASSO.

We now express the de-biased group LASSO estimator for $\alpha^{g,*}_{j,k}$ as
\begin{align}
\check{\alpha}^g_{j,k} = 
\tilde{\alpha}^g_{j,k} + 
\left(n^g\right)^{-1} 
\tilde{C}^{-1}_{j,k}
\left(
 \mathcal{V}^g_{k} -
 \sum_{l \neq j, k} \tilde{\Gamma}_{j,k,l} \mathcal{V}_{l}^g
 \right)^\top
 \left( \mathbf{X}^g_j - \mathcal{V}^g_{-j} \tilde{\boldsymbol{\alpha}}^g_j \right).
\end{align}
We have established that $\check{\alpha}^g_{j,k}$ can be written as
\begin{align*}
\tilde{C}_{j,k}
\left(\check{\alpha}^g_{j,k} - \alpha^{g,*}_{j,k}\right) = 
\left(n^g\right)^{-1}
\left(
\mathcal{V}^g_{k} -
\sum_{l \neq j, k} \Gamma^*_{j,k,l} \mathcal{V}_{l}^g
\right)^\top
\left( \mathbf{X}^g_j - \mathcal{V}^g_{-j} \boldsymbol{\alpha}^{g,*}_j \right) + o_P(n^{-1/2}).
\end{align*}
As stated above, the central limit theorem implies asymptotic normality of  $\check{\alpha}^g_{j,k}$.

We now construct an estimate for the variance  of  $\check{\alpha}^g_{j,k}$.
Suppose the residual $\mathbf{X}^g_j - \mathcal{V}^g_{-j} \boldsymbol{\alpha}^{g,*}_j$ is independent of $\mathcal{V}^g$, and let $\tau_j^g$ denote the residual variance 
\begin{align*}
\tau_j^g = \mathds{E}\left[\left(X_{j}^g - \sum_{k \neq j} \left\langle \phi_i^g, \alpha_{j,k}^{g,*} \right\rangle X_{k}^g\right)^2\right].
\end{align*}
We can approximate the variance of $\check{\alpha}^g_{j,k}$  as
\begin{align}
\check{\Omega}^g_{j,k} = \left(n^g\right)^{-2}\tau_j^g 
\tilde{C}^{-1}_{j,k}
\left(
 \mathcal{V}^g_{k} -
 \sum_{l \neq j, k} \tilde{\Gamma}_{j,k,l} \mathcal{V}_{l}^g
 \right)^\top
\left(
 \mathcal{V}^g_{k} -
 \sum_{l \neq j, k} \tilde{\Gamma}_{j,k,l} \mathcal{V}_{l}^g
 \right)
 \left(\tilde{C}^{-1}_{j,k}\right)^{\top}.
\end{align}
As $\tau_j^g$ is typically unknown, we instead us the estimate
\begin{align*}
\tilde{\tau}_j^g = \frac{\left\| \mathbf{X}^g_j - \mathcal{V}^g_{-j} \tilde{\boldsymbol{\alpha}}^g_j  \right\|_2^2}{n - \widehat{df}},
\end{align*}
where $\widehat{df}$ is an estimate of the degrees of freedom for the group LASSO estimate $\tilde{\boldsymbol{\alpha}}_j^g$.
In our implementation, we use the estimate proposed by Breheny and Huang \citep{breheny2009penalized}.

%

\subsection*{B \quad Generalized Score Matching Estimator}

In this section, we establish consistency of the  regularized score matching estimator and derive a bias-corrected estimator.

\subsubsection*{B.1 \quad Form of Generalized Score Matching Loss}

Below, we restate Theorem 3 of \cite{yu2019generalized}, which provides conditions under which the score matching loss \eqref{GenScoreLoss1} can be expressed as \eqref{GenScoreLoss2}.

\begin{theorem}
Assume the following conditions hold:
\begin{align*}
&\lim_{z_j \to \infty} h^*(z) (z_j) \left\{\frac{\partial}{\partial z_j} h(z_j) \right\} = 0 \quad \forall \, z_1,\ldots,z_{j-1},z_{j+1},\ldots,z_p \in \mathds{R}_+, \quad \forall h \in \mathcal{H}
\\
&\lim_{z_j \to 0} h^*(z) (z_j) \left\{\frac{\partial}{\partial z_j} h(z_j) \right\} = 0 \quad \forall \, z_1,\ldots,z_{j-1},z_{j+1},\ldots,z_p \in \mathds{R}_+, \quad \forall h \in \mathcal{H}
\\
&\sup_{h \in \mathcal{H}} \int \left\| \nabla \log h(z) \circ v^{1/2}(z) \right\|_2^2 h^*(z)dz < \infty
\\
&\sup_{h \in \mathcal{H}} \int \left\| \left\{\nabla \log h(z) \circ v^{1/2}(z)\right\}' \right\|_1 h^*(z)dz < \infty,
\end{align*}
where the prime symbol denotes the element-wise derivative.
Then \eqref{GenScoreLoss1} and \eqref{GenScoreLoss2} are equivalent up to an additive constant that does not depend on $h$.
\end{theorem}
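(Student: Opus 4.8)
The plan is to reduce \eqref{GenScoreLoss1} to \eqref{GenScoreLoss2} by expanding the weighted squared norm coordinatewise and then eliminating the cross term through integration by parts in each variable $z_j$. Writing $s_j(z) = \partial \log h(z)/\partial z_j$ and $s_j^*(z) = \partial \log h^*(z)/\partial z_j$, the integrand of \eqref{GenScoreLoss1} is $\sum_{j=1}^p v_j(z_j)\{s_j(z) - s_j^*(z)\}^2$. Expanding the square produces three groups of terms: $v_j s_j^2$, $-2 v_j s_j s_j^*$, and $v_j (s_j^*)^2$. Integrated against $h^*$, the last group depends only on $h^*$ and $v$, not on the candidate density $h$, so it may be absorbed into the additive constant. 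It then remains to handle the quadratic term in $h$ and the cross term.

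The key manipulation is for the cross term. Using $s_j^*(z)\, h^*(z) = \{\partial \log h^*(z)/\partial z_j\}\, h^*(z) = \partial h^*(z)/\partial z_j$, I would rewrite $-2\int v_j(z_j)\, s_j(z)\, s_j^*(z)\, h^*(z)\, dz = -2\int v_j(z_j)\, s_j(z)\, \{\partial h^*(z)/\partial z_j\}\, dz$. Applying Fubini's theorem to integrate first in $z_j$ with the remaining coordinates held fixed, I would then integrate by parts in $z_j$. By the product rule, $\partial\{v_j(z_j) s_j(z)\}/\partial z_j = \dot{v}_j(z_j)\, s_j(z) + v_j(z_j)\{\partial^2 \log h(z)/\partial z_j^2\}$, so integration by parts converts the cross term into a boundary contribution plus a multiple of $\int \{\dot{v}_j(z_j)\, s_j(z) + v_j(z_j)\, \partial^2\log h(z)/\partial z_j^2\}\, h^*(z)\, dz$. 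Combining this with the quadratic term $\int v_j(z_j)\, s_j(z)^2\, h^*(z)\, dz$ and summing over $j$ reproduces the integrand of \eqref{GenScoreLoss2}, under the normalization reflected in the empirical loss \eqref{ExpFamScoreLoss}, in which the squared term carries the factor $\tfrac12$ while the first- and second-derivative terms carry coefficient one.

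The main obstacle, and the reason the four displayed assumptions are imposed, is the rigorous justification of this integration by parts. The first two limit conditions are precisely what force the boundary term $v_j(z_j)\, s_j(z)\, h^*(z)$ to vanish as $z_j \to 0$ and as $z_j \to \infty$ for almost every value of the remaining coordinates; this is essential because the domain $\mathds{R}_+^p$ has a finite boundary at the origin, where neither $h^*$ nor $\nabla\log h$ need decay on its own. The remaining two integrability conditions guarantee that every integral appearing is absolutely convergent, so that \eqref{GenScoreLoss1} is finite, that Fubini's theorem legitimately reduces the $p$-dimensional integral to iterated one-dimensional integrals, and that the interchange of integration and differentiation in the integration-by-parts step is valid. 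Once these measure-theoretic points are secured, collecting the surviving terms yields \eqref{GenScoreLoss2}. Since the statement merely restates Theorem 3 of \citep{yu2019generalized}, I would ultimately defer to their argument for the detailed verification that the stated hypotheses suffice.
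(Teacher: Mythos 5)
The paper offers no proof of this statement---it is imported verbatim as Theorem 3 of \citep{yu2019generalized}---so the only comparison available is with that source, whose argument is exactly the expansion-plus-partial-integration you describe: absorb the $h^*$-only term into the additive constant, rewrite the cross term via $\{\partial \log h^*(z)/\partial z_j\}\,h^*(z) = \partial h^*(z)/\partial z_j$, integrate by parts coordinatewise, and invoke the two limit conditions to annihilate the boundary terms at $0$ and $\infty$ and the two integrability conditions to justify Fubini and absolute convergence. Your sketch is correct and follows that same route; the only loose end is the overall factor of $2$ between the expanded square and the displayed form of \eqref{GenScoreLoss2} (whose quadratic term carries $\tfrac12$ while the derivative terms carry coefficient one), which you rightly flag as a normalization convention and which is immaterial to the minimizer.
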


\subsubsection*{B.2 \quad Generalized Score Matching Estimator in Low Dimensions}

In this section, we provide an explicit form for the generalized score matching estimator in the low-dimensional setting and state its limiting distribution.
We first introduce some additional notation below that allows for the generalized score matching loss to be written in a condensed form.
Recall the form of the conditional density for the pairwise interaction model in \eqref{CondExpFam}.
We define
\begin{align*}
&\mathcal{V}^g_{j,k,1} = 
\begin{pmatrix}
v_j^{1/2}\left(X^g_{1,j}\right)\dot{\psi}\left(X^g_{1,j}, X^g_{1,k}\right) \times \phi\left(W^g_{1}\right)
\\
\vdots
\\
v_j^{1/2}\left(X^g_{n^g,j}\right) \dot{\psi}\left(X^g_{n^g,j}, X^g_{n^g,k}\right) \times \phi\left(W^g_{n^g}\right)
\end{pmatrix},
\\ \\
&\mathcal{V}^g_{2,j} = 
\begin{pmatrix}
v_j^{1/2}\left(X^g_{1,j}\right) \times \left\{ \dot{\zeta}\left(X^g_{1,j}, \phi_1(W^g_{1})\right),\cdots,\dot{\zeta}\left(X^g_{1,j}, \phi_d(W^g_{1})\right) \right\}
\\
\vdots
\\
v_j^{1/2}\left(X^g_{n^g,j}\right) \times \left\{ \dot{\zeta}\left(X^g_{n^g,j}, \phi_1(W^g_{n^g})\right),\cdots,\dot{\zeta}\left(X^g_{n^g,j}, \phi_d(W^g_{n^g})\right) \right\}
\end{pmatrix},
\end{align*}
\begin{align*}
&\mathcal{U}^g_{j,k,1} 
= 
\begin{pmatrix}
\left\{\dot{v}_j\left(X^g_{1,j}\right)\dot{\psi}\left(X^g_{1,j}, X^g_{1,k}\right)  + v_j\left(X^g_{1,j}\right)\ddot{\psi}\left(X^g_{1,j}, X^g_{1,k}\right)  \right\} \times \phi\left(W^g_{1}\right)
\\
\vdots
\\
\left\{\dot{v}_j\left(X^g_{1,j}\right)\dot{\psi}\left(X^g_{n^g,j}, X^g_{n^g,k}\right)  + v_j\left(X^g_{n^g,j}\right)\ddot{\psi}\left(X^g_{1,j}, X^g_{n^g,k}\right)  \right\} \times \phi\left(W^g_{n^g}\right)
\end{pmatrix},
\\ \\
&\mathcal{U}^g_{j,2} = 
\begin{pmatrix}
v_j\left(X_{1,j}^g\right) \ddot{\zeta}\left(X^g_{1,j}, \phi_1(W^g_{1})\right)
&
\cdots 
&
v_j\left(X_{1,j}^g\right) \ddot{\zeta}\left(X^g_{1,j}, \phi_d(W^g_{1})\right)
\\
\vdots & \ddots & \vdots
\\
v_j\left(X_{n^g,j}^g\right) \ddot{\zeta}\left(X^g_{n^g,j}, \phi_1(W^g_{n^g})\right)
&
\cdots 
&
v_j\left(X_{n^g,j}^g\right) \ddot{\zeta}\left(X^g_{n^g,j}, \phi_d(W^g_{n^g})\right)
\end{pmatrix} +
\\
&\quad\quad\quad
\begin{pmatrix}
\dot{v}_j\left(X_{1,j}^g\right) \dot{\zeta}\left(X^g_{1,j}, \phi_1(W^g_{1})\right) & \cdots & \dot{v}_j\left(X_{1,j}^g\right) \dot{\zeta}\left(X^g_{1,j}, \phi_d(W^g_{1})\right)
\\
\vdots & \ddots & \vdots
\\
\dot{v}_j\left(X_{n^g,j}^g\right) \dot{\zeta}\left(X^g_{n^g,j}, \phi_1(W^g_{n^g})\right) & \cdots & \dot{v}_j\left(X_{n^g,j}^g\right) \dot{\zeta}\left(X^g_{n^g,j}, \phi_d(W^g_{n^g})\right)
\end{pmatrix},
\\  \\
&\mathcal{V}^g_{j,1} =
\begin{pmatrix}
\mathcal{V}^g_{j,1,1} \\ \vdots \\ \mathcal{V}^g_{j,p,1}
\end{pmatrix}; 
\quad
\mathcal{U}^g_{j,1} =
\begin{pmatrix}
\mathcal{U}^g_{1,j,1} \\ \vdots \\ \mathcal{U}^g_{j,p,1}
\end{pmatrix}.
\end{align*}

Let $\boldsymbol{\alpha}_j = \left(\alpha_{j,1}^\top, \cdots,\alpha_{j,p}^\top\right)^\top$ for $\alpha_{j,k} \in \mathds{R}^d$ and $\boldsymbol{\theta}_j = (\theta_{j,1}, \cdots, \theta_{j,d})^\top$ for $\theta_{j,c} \in \mathds{R}$.
We can express the empirical score matching loss \eqref{ExpFamScoreLoss} as
\begin{align*}
L^g_{n,j}(\boldsymbol{\alpha}_j, \boldsymbol{\theta}_j) &=
 \left(2n^g\right)^{-1} \left( \mathcal{V}_{j,1}^g \boldsymbol{\alpha}_j + \mathcal{V}^g_{2,j} \boldsymbol{\theta}_j  \right)^\top \left( \mathcal{V}_{j,1}^g \boldsymbol{\alpha}_j+ \mathcal{V}^g_{2,j} \boldsymbol{\theta}_j \right) + \left(n^g\right)^{-1}\mathbf{1}^\top \left( \mathcal{U}^g_{1,j} \boldsymbol{\alpha}_j  + \mathcal{U}^g_{2,j} \boldsymbol{\theta}_j  \right).
\end{align*}
We write the gradient of the risk function as
\begin{align*}
\nabla L^g_{n,j}(\boldsymbol{\alpha}_j, \boldsymbol{\theta}_j) &= 
\left(n^g\right)^{-1}
\begin{pmatrix}
\left(\mathcal{V}_{j,1}^g\right)^\top\mathcal{V}_{j,1}^g  & \left(\mathcal{V}_{j,1}^g\right)^\top\mathcal{V}_{j,2}^g
\\
\left(\mathcal{V}_{j,2}^g\right)^\top\mathcal{V}_{j,1}^g & \left(\mathcal{V}_{j,2}^g\right)^\top\mathcal{V}_{j,2}^g
\end{pmatrix}
\begin{pmatrix}
\boldsymbol{\alpha}_j
\\
\boldsymbol{\theta}_j
\end{pmatrix}
+
\left(n^g\right)^{-1}
\begin{pmatrix}
\left(\mathcal{U}_{j,1}^g\right)^\top\mathbf{1}
\\
\left(\mathcal{U}_{j,2}^g\right)^\top\mathbf{1}
\end{pmatrix}.
\end{align*}
Thus, the minimizer $(\hat{\boldsymbol{\alpha}}^g_j, \hat{\boldsymbol{\theta}}^g_j)$ of the empirical loss takes the form
\begin{align*}
\begin{pmatrix}
\hat{\boldsymbol{\alpha}}^g_j
\\
\hat{\boldsymbol{\theta}}^g_j
\end{pmatrix} 
=
-
\begin{pmatrix}
\left(\mathcal{V}_{j,1}^g\right)^\top\mathcal{V}_{j,1}^g  & \left(\mathcal{V}_{j,1}^g\right)^\top\mathcal{V}_{j,2}^g
\\
\left(\mathcal{V}_{j,2}^g\right)^\top\mathcal{V}_{j,1}^g & \left(\mathcal{V}_{j,2}^g\right)^\top\mathcal{V}_{j,2}^g
\end{pmatrix}^{-1}
\begin{pmatrix}
\left(\mathcal{U}_{j,1}^g\right)^\top\mathbf{1}
\\
\left(\mathcal{U}_{j,2}^g\right)^\top\mathbf{1}
\end{pmatrix}.
\end{align*}
By applying Theorem 5.23 of van der Vaart \cite{van2000asymptotic}, 
\begin{align*}
\left(n^g\right)^{1/2}
\begin{pmatrix}
\hat{\boldsymbol{\alpha}}^g_j - \boldsymbol{\alpha}_j^{g,*}
\\
\hat{\boldsymbol{\theta}}^g_j - \boldsymbol{\theta}_j^{g,*}
\end{pmatrix} 
\to_d
N\left(
0,
\begin{pmatrix}
A B A
\end{pmatrix}
\right),
\end{align*}
where the matrices $A$ and $B$ are defined as
\begin{align*}
&A = 
\mathds{E}
\left[
\left(n^g\right)^{-1}
\begin{pmatrix}
\left(\mathcal{V}_{j,1}^g\right)^\top\mathcal{V}_{j,1}^g  & \left(\mathcal{V}_{j,1}^g\right)^\top\mathcal{V}_{j,2}^g
\\
\left(\mathcal{V}_{j,2}^g\right)^\top\mathcal{V}_{j,1}^g & \left(\mathcal{V}_{j,2}^g\right)^\top\mathcal{V}_{j,2}^g
\end{pmatrix}\right]^{-1},
\\
&B = 
\text{Cov}
\left(
\left(n^g\right)^{-1}
\begin{pmatrix}
\left(\mathcal{V}_{j,1}^g\right)^\top\mathcal{V}_{j,1}^g  & \left(\mathcal{V}_{j,1}^g\right)^\top\mathcal{V}_{j,2}^g
\\
\left(\mathcal{V}_{j,2}^g\right)^\top\mathcal{V}_{j,1}^g & \left(\mathcal{V}_{j,2}^g\right)^\top\mathcal{V}_{j,2}^g
\end{pmatrix}
\begin{pmatrix}
\boldsymbol{\alpha}^{g,*}_j
\\
\boldsymbol{\theta}^{g,*}_j
\end{pmatrix}
+
\left(n^g\right)^{-1}
\begin{pmatrix}
\left(\mathcal{U}_{j,1}^g\right)^\top\mathbf{1}
\\
\left(\mathcal{U}_{j,2}^g\right)^\top\mathbf{1}
\end{pmatrix}
\right).
\end{align*}
We estimate the variance of $(\hat{\boldsymbol{\alpha}}^g_j, \hat{\boldsymbol{\theta}}^g_j)$ as $\hat{\Omega}^g_j = \left(n^g\right)^{-1}\hat{A} \hat{B} \hat{A}$, where
\begin{align*}
&\hat{A} =
n^g
\begin{pmatrix}
\left(\mathcal{V}_{j,1}^g\right)^\top\mathcal{V}_{j,1}^g  & \left(\mathcal{V}_{j,1}^g\right)^\top\mathcal{V}_{j,2}^g
\\
\left(\mathcal{V}_{j,2}^g\right)^\top\mathcal{V}_{j,1}^g & \left(\mathcal{V}_{j,2}^g\right)^\top\mathcal{V}_{j,2}^g
\end{pmatrix}^{-1}, 
\\
&\hat{B} = \left(n^g\right)^{-1}\hat{\xi}^\top\hat{\xi},
\quad
\hat{\xi} =
\begin{pmatrix}
\text{diag}\left(\mathcal{V}_{j,1}^g\hat{\boldsymbol{\alpha}}^g_j  + \mathcal{V}_{j,2}^g \hat{\boldsymbol{\theta}}^g_j \right)\mathcal{V}_{j,1}^g
\\
\text{diag}\left(\mathcal{V}_{j,1}^g\hat{\boldsymbol{\alpha}}^g_j  + \mathcal{V}_{j,2}^g \hat{\boldsymbol{\theta}}^g_j \right) \mathcal{V}_{j,2}^g
\end{pmatrix}
+
\begin{pmatrix}
\mathcal{U}_{j,1}^g
\\
\mathcal{U}_{j,2}^g
\end{pmatrix}.
\end{align*}

\subsubsection*{B.3 \quad Consistency of Regularized Generalized Score Matching Estimator}

In this subsection, we argue that the regularized generalized score matching estimators $\tilde{\boldsymbol{\alpha}}^g_j$ and $\tilde{\boldsymbol{\theta}}^g_j$ from \eqref{RegularizedScoreMatching} are consistent.
Let $\mathcal{P}_j(\boldsymbol{\alpha}_j) = \sum_{j=1}^p \|\alpha_{j,k}\|_2$.
We establish convergence rates of $\mathcal{P}_j\left( \tilde{\boldsymbol{\alpha}}_j^{g} - \boldsymbol{\alpha}_j^{g,*} \right)$ and $\left\|\tilde{\boldsymbol{\theta}}^g_j - \boldsymbol{\theta}_j^{g,*} \right\|_2$.
Our approach is based on proof techniques described in B\"uhlmann and van de Geer \cite{buhlmann2011statistics}.

Our result requires a notion of compatibility between the penalty function $\mathcal{P}_j$ and  the loss $L^g_{n,j}$.
Such notions are commonly assumed in the high-dimensional literature.
Below, we define the compatibility condition.
\begin{definition}[Compatibility Condition]
Let $S$ be a set containing  indices of the nonzero elements of $\boldsymbol{\alpha}_j^{g,*}$, and let $\bar{S}$ denote the complement of $S$.
Let $\mathds{1}_{S}$ be a $(p-1)d$-dimensional vector where the $r$-th element is one if $r \in S$, and zero otherwise.
The group LASSO compatibility condition holds for the index set $S \subset \{1,\ldots,p\}$ and for constant $C > 0$ if for all $\Omega\left(\boldsymbol{\alpha}_j \circ \mathds{1}_S\right) \leq 3\Omega\left(\boldsymbol{\alpha}_j \circ \mathds{1}_{\bar{S}} \right) + \| \boldsymbol{\theta}_j \|_2$,
\begin{align*}
\Omega\left( \boldsymbol{\alpha}_j \circ \mathds{1}_{S} \right) + \| \boldsymbol{\theta}_j \|/2 \leq \frac{|S|^{1/2}}{C} 
\left\{
\begin{pmatrix}
\boldsymbol{\alpha}_j^\top
&
\boldsymbol{\theta}_j^\top
\end{pmatrix}
\begin{pmatrix}
\left(\mathcal{V}_{j,1}^g\right)^\top\mathcal{V}_{j,1}^g  & \left(\mathcal{V}_{j,1}^g\right)^\top\mathcal{V}_{j,2}^g
\\
\left(\mathcal{V}_{j,2}^g\right)^\top\mathcal{V}_{j,1}^g & \left(\mathcal{V}_{j,2}^g\right)^\top\mathcal{V}_{j,2}^g
\end{pmatrix}
\begin{pmatrix}
\boldsymbol{\alpha}_j 
\\
\boldsymbol{\theta}_j
\end{pmatrix} 
\right\}^{1/2},
\end{align*} 
where $\circ$ is the element-wise product operator.
\end{definition}

\begin{theorem}
Let $\mathcal{E}$ be the set
\begin{align*}
\mathcal{E} = 
&\left\{
\max_{k \neq j}
\left\{
\left\| 
\left(\mathcal{V}_{j,k,1}^g\right)^\top \left(\mathcal{V}_{j,1}^g \boldsymbol{\alpha}_j^{g,*} + \mathcal{V}_{j,2}^g\boldsymbol{\theta}_j^{g,*} \right) + \left(\mathcal{U}^g_{j,1}\right)^\top \mathbf{1}
\right\|_2\right\} 
\leq 
n^g\lambda_0
\right\} \cap
\\
&\left\{
\left\| 
\left(\mathcal{V}_{j,k,2}^g\right)^\top \left(\mathcal{V}_{j,1}^g \boldsymbol{\alpha}_j^{g,*} + \mathcal{V}_{j,2}^g\boldsymbol{\theta}_j^{g,*}  \right) + \left(\mathcal{U}^g_{j,2}\right)^\top \mathbf{1}
\right\|_2
\leq 
n^g\lambda_0
\right\}
\end{align*}
for some $\lambda_0 \leq \lambda/2$.
Suppose the compatibility condition also holds.
Then on the set $\mathcal{E}$,
\begin{align*}
\mathcal{P}\left( \tilde{\boldsymbol{\alpha}}^g_j - \boldsymbol{\alpha}^{g,*}_j \right)
+
\| \tilde{\boldsymbol{\theta}}^{g}_j - \boldsymbol{\theta}_j^{g,*} \|_2
\leq \frac{\lambda 4 |S|}{C^2} .
\end{align*}
\end{theorem}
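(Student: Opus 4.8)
The plan is to prove the bound via the standard \emph{basic inequality} argument for penalized M-estimators with a quadratic loss and a convex, group-separable penalty, as developed in Chapter 6 of B\"uhlmann and van de Geer \citep{buhlmann2011statistics}. First I would use optimality: since $(\tilde{\boldsymbol{\alpha}}^g_j, \tilde{\boldsymbol{\theta}}^g_j)$ minimizes the penalized loss in \eqref{RegularizedScoreMatching}, evaluating the objective at the estimator and at the truth $(\boldsymbol{\alpha}^{g,*}_j, \boldsymbol{\theta}^{g,*}_j)$ gives
\begin{align*}
L^g_{n,j}(\tilde{\boldsymbol{\alpha}}^g_j, \tilde{\boldsymbol{\theta}}^g_j) + \lambda \mathcal{P}_j(\tilde{\boldsymbol{\alpha}}^g_j) \leq L^g_{n,j}(\boldsymbol{\alpha}^{g,*}_j, \boldsymbol{\theta}^{g,*}_j) + \lambda \mathcal{P}_j(\boldsymbol{\alpha}^{g,*}_j).
\end{align*}
Because $L^g_{n,j}$ is an exact quadratic in $(\boldsymbol{\alpha}_j, \boldsymbol{\theta}_j)$ whose Hessian is the positive semidefinite Gram matrix appearing in the compatibility condition, a second-order expansion about the truth is exact: the loss difference equals the inner product of the gradient $\nabla L^g_{n,j}(\boldsymbol{\alpha}^{g,*}_j, \boldsymbol{\theta}^{g,*}_j)$ with the error, plus one half of the error's quadratic form. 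Substituting this into the basic inequality isolates one half of the (nonnegative) quadratic form of the error on the left, bounded on the right by the magnitude of the gradient inner product together with the penalty difference $\lambda\{\mathcal{P}_j(\boldsymbol{\alpha}^{g,*}_j) - \mathcal{P}_j(\tilde{\boldsymbol{\alpha}}^g_j)\}$.

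Next I would control the score term on the event $\mathcal{E}$. The two blocks of $\nabla L^g_{n,j}(\boldsymbol{\alpha}^{g,*}_j, \boldsymbol{\theta}^{g,*}_j)$ are, up to the factor $n^g$, exactly the vectors whose norms $\mathcal{E}$ bounds by $n^g\lambda_0$: the per-group $\ell_2$ norms of the $\boldsymbol{\alpha}$-block and the $\ell_2$ norm of the $\boldsymbol{\theta}$-block are each at most $\lambda_0$. Splitting the gradient inner product accordingly, I would bound the $\boldsymbol{\alpha}$-part by the dual-norm inequality of Lemma 1, yielding $\lambda_0\,\mathcal{P}_j(\tilde{\boldsymbol{\alpha}}^g_j - \boldsymbol{\alpha}^{g,*}_j)$, and the $\boldsymbol{\theta}$-part by Cauchy--Schwarz, yielding $\lambda_0\,\|\tilde{\boldsymbol{\theta}}^g_j - \boldsymbol{\theta}^{g,*}_j\|_2$. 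Using $\lambda_0 \leq \lambda/2$ together with $\mathcal{P}_j(\boldsymbol{\alpha}^{g,*}_j) = \mathcal{P}_j(\boldsymbol{\alpha}^{g,*}_j \circ \mathds{1}_S)$, the separability of $\mathcal{P}_j$ across $S$ and $\bar S$, and the triangle inequality, I would absorb the part of $\mathcal{P}_j(\tilde{\boldsymbol{\alpha}}^g_j - \boldsymbol{\alpha}^{g,*}_j)$ supported on $\bar S$ into the penalty difference. This produces both a cone-type inequality placing the error in the region on which the compatibility condition is assumed, and an upper bound for the left-hand quadratic form in terms of $\lambda\{\mathcal{P}_j((\tilde{\boldsymbol{\alpha}}^g_j - \boldsymbol{\alpha}^{g,*}_j)\circ \mathds{1}_S) + \|\tilde{\boldsymbol{\theta}}^g_j - \boldsymbol{\theta}^{g,*}_j\|_2/2\}$.

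Finally I would invoke the compatibility condition to bound $\mathcal{P}_j((\tilde{\boldsymbol{\alpha}}^g_j - \boldsymbol{\alpha}^{g,*}_j)\circ \mathds{1}_S) + \|\tilde{\boldsymbol{\theta}}^g_j - \boldsymbol{\theta}^{g,*}_j\|_2/2$ by $|S|^{1/2}/C$ times the square root of the error's quadratic form, and then apply an elementary inequality of the form $4ab \leq a^2 + 4b^2$ (with $a$ the square root of the quadratic form) to absorb the quadratic form back into the left-hand side. Collecting the surviving terms then yields the stated bound $\mathcal{P}_j(\tilde{\boldsymbol{\alpha}}^g_j - \boldsymbol{\alpha}^{g,*}_j) + \|\tilde{\boldsymbol{\theta}}^g_j - \boldsymbol{\theta}^{g,*}_j\|_2 \leq 4\lambda|S|/C^2$. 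I expect the main obstacle to be the bookkeeping forced by penalizing only $\boldsymbol{\alpha}_j$ while leaving $\boldsymbol{\theta}_j$ unpenalized: the unpenalized block must be carried through both the cone inequality and the compatibility condition (which is precisely why each is stated with the extra $\|\boldsymbol{\theta}_j\|_2$ term), and the numerical constants must be tracked carefully so that the $\boldsymbol{\theta}$-error stays controlled despite receiving no direct penalty. The individual estimates are routine; getting the cone inequality and the final constant exactly right is where the care is required.
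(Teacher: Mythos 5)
Your proposal follows essentially the same route as the paper's proof: the basic inequality from optimality, exact quadratic expansion of the score matching loss, control of the gradient term on $\mathcal{E}$ via the dual-norm inequality of Lemma 1 (with Cauchy--Schwarz for the unpenalized $\boldsymbol{\theta}$-block), separation of the penalty over $S$ and $\bar{S}$ by the triangle inequality, the compatibility condition, and a final elementary inequality of the form $ab \leq a^2 + b^2$ to absorb the quadratic form. The argument and constants match; this is the standard B\"uhlmann--van de Geer basic-inequality proof that the paper itself uses.
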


\begin{proof}[\textbf{Proof of Theorem 2}]
\noindent The regularized score matching estimator $\tilde{\boldsymbol{\alpha}}_j^{g}$ necessarily satisfies the following basic inequality:
\begin{align*}
L^g_{n,j}\left( \tilde{\boldsymbol{\alpha}}^g_j, \tilde{\boldsymbol{\theta}}^g_j\right) + \lambda\mathcal{P}_j\left( \tilde{\boldsymbol{\alpha}}^g_j \right) \leq
L^g_{n,j}\left( \boldsymbol{\alpha}^{g,*}_j, \boldsymbol{\theta}^{g,*}_j\right) + \lambda\mathcal{P}_j\left( \boldsymbol{\alpha}^{g,*}_j \right).
\end{align*}
With some algebra, this inequality can be rewritten as
\begin{align*}
&\left(2n^g\right)^{-1}
\begin{pmatrix}
\left(\tilde{\boldsymbol{\alpha}}^g_j - \boldsymbol{\alpha}^{g,*}_j \right)^\top
&
\left(\tilde{\boldsymbol{\theta}}^g_j - \boldsymbol{\theta}^{g,*}_j\right)^\top
\end{pmatrix}
\begin{pmatrix}
\left(\mathcal{V}_{j,1}^g\right)^\top\mathcal{V}_{j,1}^g  & \left(\mathcal{V}_{j,1}^g\right)^\top\mathcal{V}_{j,2}^g
\\
\left(\mathcal{V}_{j,2}^g\right)^\top\mathcal{V}_{j,1}^g & \left(\mathcal{V}_{j,2}^g\right)^\top\mathcal{V}_{j,2}^g
\end{pmatrix}
\begin{pmatrix}
\tilde{\boldsymbol{\alpha}}^g_j - \boldsymbol{\alpha}^{g,*}_j
\\
\tilde{\boldsymbol{\theta}}^g_j - \boldsymbol{\theta}^{g,*}_j
\end{pmatrix}
+
\lambda\mathcal{P}_j\left( \tilde{\boldsymbol{\alpha}}^g_j \right) 
\leq
\\
&
-\left(n^g\right)^{-1}
\begin{pmatrix}
\left(\tilde{\boldsymbol{\alpha}}^g_j - \boldsymbol{\alpha}^{g,*}_j \right)^\top
&
\left(\tilde{\boldsymbol{\theta}}^g_j - \boldsymbol{\theta}^{g,*}_j\right)^\top
\end{pmatrix}
\begin{pmatrix}
\left(\mathcal{V}_{j,1}^g\right)^\top \left(\mathcal{V}_{j,1}^g \boldsymbol{\alpha}_j^{g,*} + \mathcal{V}_{j,2}^g\boldsymbol{\theta}_j^{g,*} \right) + \left(\mathcal{U}^g_{j,1}\right)^\top \mathbf{1}
\\
\left(\mathcal{V}_{j,2}^g\right)^\top \left(\mathcal{V}_{j,1}^g \boldsymbol{\alpha}_j^{g,*} + \mathcal{V}_{j,2}^g\boldsymbol{\theta}_j^{g,*}  \right)  +  \left(\mathcal{U}^g_{j,2}\right)^\top \mathbf{1}
\end{pmatrix}
+
\lambda\mathcal{P}_j\left( \boldsymbol{\alpha}^{g,*}_j \right).
\end{align*}

\noindent By Lemma 1, on the set $\mathcal{E}$ and using $\lambda \geq \lambda_0/2$ we get
\begin{align*}
&
\left(n^g\right)^{-1}
\begin{pmatrix}
\left(\tilde{\boldsymbol{\alpha}}^g_j - \boldsymbol{\alpha}^{g,*}_j \right)^\top
&
\left(\tilde{\boldsymbol{\theta}}^g_j - \boldsymbol{\theta}^{g,*}_j\right)^\top
\end{pmatrix}
\begin{pmatrix}
\left(\mathcal{V}_{j,1}^g\right)^\top\mathcal{V}_{j,1}^g  & \left(\mathcal{V}_{j,1}^g\right)^\top\mathcal{V}_{j,2}^g
\\
\left(\mathcal{V}_{j,2}^g\right)^\top\mathcal{V}_{j,1}^g & \left(\mathcal{V}_{j,2}^g\right)^\top\mathcal{V}_{j,2}^g
\end{pmatrix}
\begin{pmatrix}
\tilde{\boldsymbol{\alpha}}^g_j - \boldsymbol{\alpha}^{g,*}_j
\\
\tilde{\boldsymbol{\theta}}^g_j - \boldsymbol{\theta}^{g,*}_j
\end{pmatrix}
+
2\lambda \mathcal{P}_j\left( \tilde{\boldsymbol{\alpha}}^g_j \right) 
\leq
\\
&
\lambda\left\|\tilde{\boldsymbol{\theta}}_j - \boldsymbol{\theta}^*_j \right\|_2
+
2\lambda \mathcal{P}_j\left( \boldsymbol{\alpha}^{g,*}_j \right) + \lambda\mathcal{P}_j\left( \tilde{\boldsymbol{\alpha}}^g_j - \boldsymbol{\alpha}^{g,*}_j \right).
\end{align*}
On the left hand side, we apply the triangle inequality to get
\begin{align*}
\mathcal{P}_j\left( \tilde{\boldsymbol{\alpha}}^g_j \right) = 
\mathcal{P}_j\left( \tilde{\boldsymbol{\alpha}}^g_j \circ \mathds{1}_{S} \right) 
+ 
\mathcal{P}_j\left( \tilde{\boldsymbol{\alpha}}^g_j \circ \mathds{1}_{\bar{S}} \right)
\geq
\mathcal{P}_j\left( \boldsymbol{\alpha}_j^{g,*}  \circ \mathds{1}_{S} \right)
-
\mathcal{P}_j\left(\left( \tilde{\boldsymbol{\alpha}}^g_j - \boldsymbol{\alpha}_j^{g,*} \right) \circ \mathds{1}_{S} \right)
+
\mathcal{P}_j\left( \tilde{\boldsymbol{\alpha}}^g_j \circ \mathds{1}_{\bar{S}} \right).
\end{align*}
On the right hand side, we observe that
\begin{align*}
\mathcal{P}_j\left( \tilde{\boldsymbol{\alpha}}^g_j - \boldsymbol{\alpha}^{g,*}_j \right) =
\mathcal{P}_j\left(\left( \tilde{\boldsymbol{\alpha}}^g_j - \boldsymbol{\alpha}_j^{g,*} \right) \circ \mathds{1}_{S} \right)
+
\mathcal{P}_j\left(\tilde{\boldsymbol{\alpha}}^g_j  \circ \mathds{1}_{\bar{S}} \right).
\end{align*}
We then have
\begin{align*}
&
\left(n^g\right)^{-1}
\begin{pmatrix}
\left(\tilde{\boldsymbol{\alpha}}^g_j - \boldsymbol{\alpha}^{g,*}_j \right)^\top
&
\left(\tilde{\boldsymbol{\theta}}^g_j - \boldsymbol{\theta}^{g,*}_j\right)^\top
\end{pmatrix}
\begin{pmatrix}
\left(\mathcal{V}_{j,1}^g\right)^\top\mathcal{V}_{j,1}^g  & \left(\mathcal{V}_{j,1}^g\right)^\top\mathcal{V}_{j,2}^g
\\
\left(\mathcal{V}_{j,2}^g\right)^\top\mathcal{V}_{j,1}^g & \left(\mathcal{V}_{j,2}^g\right)^\top\mathcal{V}_{j,2}^g
\end{pmatrix}
\begin{pmatrix}
\tilde{\boldsymbol{\alpha}}^g_j - \boldsymbol{\alpha}^{g,*}_j
\\
\tilde{\boldsymbol{\theta}}^g_j - \boldsymbol{\theta}^{g,*}_j
\end{pmatrix}
+
\lambda \mathcal{P}_j\left( \tilde{\boldsymbol{\alpha}}^g_j \circ \mathds{1}_{\bar{S}} \right) 
\leq
\\
&
\lambda\left\|\tilde{\boldsymbol{\theta}}^g_j - \boldsymbol{\theta}^{g,*}_j \right\|_2
+
3\lambda\mathcal{P}_j\left(\left( \tilde{\boldsymbol{\alpha}}^g_j - \boldsymbol{\alpha}_j^{g,*} \right) \circ \mathds{1}_{S} \right).
\end{align*}
Now, 
\begin{align*}
&
\left(n^g\right)^{-1}
\begin{pmatrix}
\left(\tilde{\boldsymbol{\alpha}}^g_j - \boldsymbol{\alpha}^{g,*}_j \right)^\top
&
\left(\tilde{\boldsymbol{\theta}}^g_j - \boldsymbol{\theta}^{g,*}_j\right)^\top
\end{pmatrix}
\begin{pmatrix}
\left(\mathcal{V}_{j,1}^g\right)^\top\mathcal{V}_{j,1}^g  & \left(\mathcal{V}_{j,1}^g\right)^\top\mathcal{V}_{j,2}^g
\\
\left(\mathcal{V}_{j,2}^g\right)^\top\mathcal{V}_{j,1}^g & \left(\mathcal{V}_{j,2}^g\right)^\top\mathcal{V}_{j,2}^g
\end{pmatrix}
\begin{pmatrix}
\tilde{\boldsymbol{\alpha}}^g_j - \boldsymbol{\alpha}^{g,*}_j
\\
\tilde{\boldsymbol{\theta}}^g_j - \boldsymbol{\theta}^{g,*}_j
\end{pmatrix}
+
\\
&\lambda \mathcal{P}_j\left( \tilde{\boldsymbol{\alpha}}^g_j - \boldsymbol{\alpha}_j^{g,*} \right) +
 \lambda\left\|\tilde{\boldsymbol{\theta}}^g_j - \boldsymbol{\theta}^{g,*}_j \right\|_2
=
\\ \\
&
\left(n^g\right)^{-1}\begin{pmatrix}
\left(\tilde{\boldsymbol{\alpha}}^g_j - \boldsymbol{\alpha}^{g,*}_j \right)^\top
&
\left(\tilde{\boldsymbol{\theta}}^g_j - \boldsymbol{\theta}^{g,*}_j\right)^\top
\end{pmatrix}
\begin{pmatrix}
\left(\mathcal{V}_{j,1}^g\right)^\top\mathcal{V}_{j,1}^g  & \left(\mathcal{V}_{j,1}^g\right)^\top\mathcal{V}_{j,2}^g
\\
\left(\mathcal{V}_{j,2}^g\right)^\top\mathcal{V}_{j,1}^g & \left(\mathcal{V}_{j,2}^g\right)^\top\mathcal{V}_{j,2}^g
\end{pmatrix}
\begin{pmatrix}
\tilde{\boldsymbol{\alpha}}^g_j - \boldsymbol{\alpha}^{g,*}_j
\\
\tilde{\boldsymbol{\theta}}^g_j - \boldsymbol{\theta}^{g,*}_j
\end{pmatrix}
+
\\
&\lambda \mathcal{P}_j\left( \tilde{\boldsymbol{\alpha}}^g_j \circ \mathds{1}_{\bar{S}} \right) + 
\lambda \mathcal{P}_j\left( \left( \tilde{\boldsymbol{\alpha}}^g_j - \boldsymbol{\alpha}_j^{g,*} \right) \circ \mathds{1}_{S} \right)
 +
 \lambda\left\|\tilde{\boldsymbol{\theta}}^g_j - \boldsymbol{\theta}^{g,*}_j \right\|_2
 \leq
\\ \\
&
\frac{\lambda 2|S|^{1/2}}{C} 
\left\{
\left(n^g\right)^{-1}
\begin{pmatrix}
\left(\tilde{\boldsymbol{\alpha}}^g_j - \boldsymbol{\alpha}^{g,*}_j \right)^\top
&
\left(\tilde{\boldsymbol{\theta}}^g_j - \boldsymbol{\theta}^{g,*}_j\right)^\top
\end{pmatrix}
\begin{pmatrix}
\left(\mathcal{V}_{j,1}^g\right)^\top\mathcal{V}_{j,1}^g  & \left(\mathcal{V}_{j,1}^g\right)^\top\mathcal{V}_{j,2}^g
\\
\left(\mathcal{V}_{j,2}^g\right)^\top\mathcal{V}_{j,1}^g & \left(\mathcal{V}_{j,2}^g\right)^\top\mathcal{V}_{j,2}^g
\end{pmatrix}
\begin{pmatrix}
\tilde{\boldsymbol{\alpha}}^g_j - \boldsymbol{\alpha}_j^{g,*} 
\\
\tilde{\boldsymbol{\theta}}^g_j - \boldsymbol{\theta}^{g,*}_j
\end{pmatrix}
\right\}^{1/2} \leq 
\\ \\
&\frac{\lambda^2 4 |S|}{C^2} 
+
\left(n^g\right)^{-1}
\begin{pmatrix}
\left(\tilde{\boldsymbol{\alpha}}^g_j - \boldsymbol{\alpha}^{g,*}_j \right)^\top
&
\left(\tilde{\boldsymbol{\theta}}^g_j - \boldsymbol{\theta}^{g,*}_j\right)^\top
\end{pmatrix}
\begin{pmatrix}
\left(\mathcal{V}_{j,1}^g\right)^\top\mathcal{V}_{j,1}^g  & \left(\mathcal{V}_{j,1}^g\right)^\top\mathcal{V}_{j,2}^g
\\
\left(\mathcal{V}_{j,2}^g\right)^\top\mathcal{V}_{j,1}^g & \left(\mathcal{V}_{j,2}^g\right)^\top\mathcal{V}_{j,2}^g
\end{pmatrix}
\begin{pmatrix}
\tilde{\boldsymbol{\alpha}}^g_j - \boldsymbol{\alpha}_j^{g,*} 
\\
\tilde{\boldsymbol{\theta}}^g_j - \boldsymbol{\theta}^{g,*}_j
\end{pmatrix},
\end{align*}
where we use the compatiblility condition for the first inequality, and for the second inequality use the fact that
\begin{align*}
ab \leq b^2 + a^2
\end{align*}
for any $a,b\in\mathds{R}$.  The conclusion follows immediately.
\end{proof}

If the event $\mathcal{E}$ occurs with probability tending to one, Theorem 2 implies
\begin{align*}
\mathcal{P}\left( \tilde{\boldsymbol{\alpha}}^g_j - \boldsymbol{\alpha}^{g,*}_j \right)
+
\| \tilde{\boldsymbol{\theta}}^{g}_j - \boldsymbol{\theta}_j^{g,*} \|_2
=
O_P\left(\lambda\right).
\end{align*}
We select $\lambda$ so that the event $\mathcal{E}$ occurs with high probability.
For instance, suppose the elements of the matrix
\begin{align*}
\xi = 
\begin{pmatrix}
\text{diag}\left(\mathcal{V}_{j,1}^g\boldsymbol{\alpha}^{g,*}_j  + \mathcal{V}_{j,2}^g \boldsymbol{\theta}^{g,*}_j \right)\mathcal{V}_{j,1}^g
\\
\text{diag}\left(\mathcal{V}_{j,1}^g\boldsymbol{\alpha}^{g,*}_j  + \mathcal{V}_{j,2}^g \boldsymbol{\theta}^{g,*}_j \right) \mathcal{V}_{j,2}^g
\end{pmatrix}
\end{align*}
are sub-Gaussian, and consider the event
\begin{align*}
\bar{\mathcal{E}} = 
&\left\{
\left\| 
\begin{pmatrix}
\left(\mathcal{V}_{j,1}^g\right)^\top \left(\mathcal{V}_{j,1}^g \boldsymbol{\alpha}_j^{g,*} + \mathcal{V}_{j,2}^g\boldsymbol{\theta}_j^{g,*}  \right) + \left(\mathcal{U}^g_{j,1}\right)^\top \mathbf{1}
\\
\left(\mathcal{V}_{j,2}^g\right)^\top \left(\mathcal{V}_{j,1}^g \boldsymbol{\alpha}_j^{g,*} + \mathcal{V}_{j,2}^g\boldsymbol{\theta}_j^{g,*}  \right)  +  \left(\mathcal{U}^g_{j,2}\right)^\top \mathbf{1}
\end{pmatrix}
\right\|_{\infty}
\leq 
\frac{n^g\lambda_0}{d}
\right\},
\end{align*}
where $\|\cdot\|_{\infty}$ is the $\ell_\infty$ norm.
Observing that $\mathcal{E} \subset \bar{\mathcal{E}}$, it is only necessary to show that $\bar{\mathcal{E}}$ holds with high probability.
It is shown in Corollary 2 of Negahban et al. \cite{negahban2012unified} that there exist constants $u_1, u_2 >0$ such that with $\lambda_0 \asymp \{\log(p)/n\}^{1/2}$, $\bar{\mathcal{E}}$ holds with probability at least $1 - u_1p^{-u_2}$.
Thus, $\mathcal{E}$ occurs with probability tending to one as $p \to \infty$.
For distributions with heavier tails, a larger choice of $\lambda$ may be  required \citep{yu2019generalized}.

\subsubsection*{B.4 \quad De-biased Score Matching Estimator}

The KKT conditions for the regularized score matching loss imply that the estimator $\tilde{\boldsymbol{\alpha}}^g_j$ satisfies
\begin{align*}
\nabla L_{n,j}(\tilde{\boldsymbol{\alpha}}^g_j, \tilde{\boldsymbol{\theta}}^g_j) &= 
\left(n^g\right)^{-1}
\begin{pmatrix}
\left(\mathcal{V}_{j,1}^g\right)^\top\mathcal{V}_{j,1}^g  & \left(\mathcal{V}_{j,1}^g\right)^\top\mathcal{V}_{j,2}^g
\\
\left(\mathcal{V}_{j,2}^g\right)^\top\mathcal{V}_{j,1}^g & \left(\mathcal{V}_{j,2}^g\right)^\top\mathcal{V}_{j,2}^g
\end{pmatrix}
\begin{pmatrix}
\tilde{\boldsymbol{\alpha}}_j^g
\\
\tilde{\boldsymbol{\theta}}_j^g
\end{pmatrix}
+
\left(n^g\right)^{-1}
\begin{pmatrix}
\left(\mathcal{U}_{j,1}^g\right)^\top\mathbf{1}
\\
\left(\mathcal{U}_{j,2}^g\right)^\top\mathbf{1}
\end{pmatrix}
=
\begin{pmatrix}
\lambda \nabla P\left( \tilde{\boldsymbol{\alpha}}^g_j \right)
\\
\mathbf{0}
\end{pmatrix}.
\end{align*}
With some algebra, we can rewrite the KKT conditions as
\begin{align*}
&\left(n^g\right)^{-1}
\begin{pmatrix}
\left(\mathcal{V}_{j,1}^g\right)^\top\mathcal{V}_{j,1}^g  & \left(\mathcal{V}_{j,1}^g\right)^\top\mathcal{V}_{j,2}^g
\\
\left(\mathcal{V}_{j,2}^g\right)^\top\mathcal{V}_{j,1}^g & \left(\mathcal{V}_{j,2}^g\right)^\top\mathcal{V}_{j,2}^g
\end{pmatrix}
\begin{pmatrix}
\tilde{\boldsymbol{\alpha}}^g_j - \boldsymbol{\alpha}_j^{g,*}
\\
\tilde{\boldsymbol{\theta}}^g_j - \boldsymbol{\theta}_j^{g,*}
\end{pmatrix}
=
\\
&\lambda
\begin{pmatrix}
\nabla P\left( \tilde{\boldsymbol{\alpha}}^g_j \right)
\\
\mathbf{0}
\end{pmatrix}
-
\left(n^g\right)^{-1}
\begin{pmatrix}
\left(\mathcal{V}_{j,1}^g\right)^\top \left(\mathcal{V}_{j,1}^g \boldsymbol{\alpha}_j^{g,*} + \mathcal{V}_{j,2}^g\boldsymbol{\theta}_j^{g,*}   \right) + \left(\mathcal{U}^g_{j,1}\right)^\top \mathbf{1}
\\
\left(\mathcal{V}_{j,2}^g\right)^\top \left(\mathcal{V}_{j,1}^g \boldsymbol{\alpha}_j^{g,*} + \mathcal{V}_{j,2}^g\boldsymbol{\theta}_j^{g,*}  \right)  +  \left(\mathcal{U}^g_{j,2}\right)^\top \mathbf{1}
\end{pmatrix}.
\end{align*}
Now, let $\Sigma_{j,n}$ be the matrix
\begin{align*}
\Sigma_{j,n} = 
\left(n^g\right)^{-1}
\begin{pmatrix}
\left(\mathcal{V}_{j,1}^g\right)^\top\mathcal{V}_{j,1}^g  & \left(\mathcal{V}_{j,1}^g\right)^\top\mathcal{V}_{j,2}^g
\\
\left(\mathcal{V}_{j,2}^g\right)^\top\mathcal{V}_{j,1}^g & \left(\mathcal{V}_{j,2}^g\right)^\top\mathcal{V}_{j,2}^g
\end{pmatrix},
\end{align*}
let $\Sigma_j = \mathds{E}[\Sigma_{j,n}]$, and let $\tilde{M}_j$ be an estimate of $\Sigma_j^{-1}$.
We can now rewrite the KKT conditions as
\begin{align}
\begin{pmatrix}
\tilde{\boldsymbol{\alpha}}^g_j - \boldsymbol{\alpha}_j^{g,*}
\\
\tilde{\boldsymbol{\theta}}^g_j - \boldsymbol{\theta}_j^{g,*}
\end{pmatrix}
&=
\underset{(\mathrm{i})}{\underbrace{\lambda \tilde{M}_j
\begin{pmatrix}
\nabla P\left( \tilde{\boldsymbol{\alpha}}^g_j \right)
\\
\mathbf{0}
\end{pmatrix}}}
-
\underset{(\mathrm{ii})}{\underbrace{\left(n^g\right)^{-1}
\tilde{M}_j
\begin{pmatrix}
\left(\mathcal{V}_{j,1}^g\right)^\top \left(\mathcal{V}_{j,1}^g \boldsymbol{\alpha}_j^{g,*} + \mathcal{V}_{j,2}^g\boldsymbol{\theta}_j^{g,*}  \right) + \left(\mathcal{U}^g_{j,1}\right)^\top \mathbf{1}
\\
\left(\mathcal{V}_{j,2}^g\right)^\top \left(\mathcal{V}_{j,1}^g \boldsymbol{\alpha}_j^{g,*} + \mathcal{V}_{j,2}^g\boldsymbol{\theta}_j^{g,*}   \right)  +  \left(\mathcal{U}^g_{j,2}\right)^\top \mathbf{1}
\end{pmatrix} \nonumber
}}
 +
\\
&\quad\quad\quad
\underset{(\mathrm{iii})}{
\underbrace{\left(n^g\right)^{-1}
\left\{
I - 
\Sigma_{j,n}
\tilde{M}_j
\right\}
\begin{pmatrix}
\tilde{\boldsymbol{\alpha}}^g_j - \boldsymbol{\alpha}_j^{g,*}
\\
\tilde{\boldsymbol{\theta}}^g_j - \boldsymbol{\theta}_j^{g,*}
\end{pmatrix}
}}.
\label{DebiasSMExp}
\end{align}
As is the case for the de-biased group LASSO in Appendix A, the first term $(\mathrm{i})$ in \eqref{DebiasSMExp} depends only on the observed data and can be directly subtracted from the initial estimate.  
The second term $(\mathrm{ii})$ is asymptotically equivalent to 
\begin{align}
\left(n^g\right)^{-1}\Sigma_j^{-1}
\begin{pmatrix}
\left(\mathcal{V}_{j,1}^g\right)^\top \left(\mathcal{V}_{j,1}^g \boldsymbol{\alpha}_j^{g,*} + \mathcal{V}_{j,2}^g\boldsymbol{\theta}_j^{g,*}  \right) + \left(\mathcal{U}^g_{j,1}\right)^\top \mathbf{1}
\\
\left(\mathcal{V}_{j,2}^g\right)^\top \left(\mathcal{V}_{j,1}^g \boldsymbol{\alpha}_j^{g,*} + \mathcal{V}_{j,2}^g\boldsymbol{\theta}_j^{g,*} \right)  +  \left(\mathcal{U}^g_{j,2}\right)^\top \mathbf{1}
\end{pmatrix},
\label{SMSlutsky}
\end{align}
if $\tilde{M}_j$ is a consistent estimate of $\Sigma_j^{-1}$.  Using the fact that $\mathds{E}\left[ \nabla L^g_{n,j}\left(\boldsymbol{\alpha}^{g,*}_j, \boldsymbol{\theta}^{g,*}_j  \right) \right] = \mathbf{0}$, it can be seen that \eqref{SMSlutsky} is an average of \textit{i.i.d.} random quantities with mean zero.
The central theorem then implies that any low-dimensional sub-vector is asymptotically normal.
The last term $(\mathrm{iii})$ is asymptotically negligible if $\tilde{M}_{j}$ is an approximate inverse of $\Sigma_{j,n}$ and if $(\tilde{\boldsymbol{\alpha}}_j^g, \tilde{\boldsymbol{\theta}}_j^g)$ is consistent for $(\boldsymbol{\alpha}_j^{g,*}, \boldsymbol{\theta}_j^{g,*})$.
Thus, for an appropriate choice of $\tilde{M}_j$, we expect asymptotic normality of an estimator of the form
\begin{align*}
\begin{pmatrix}
\check{\boldsymbol{\alpha}}^g_j
\\
\check{\boldsymbol{\theta}}^g_j
\end{pmatrix}
=
\begin{pmatrix}
\tilde{\boldsymbol{\alpha}}^g_j
\\
\tilde{\boldsymbol{\theta}}^g_j
\end{pmatrix} -
\lambda \tilde{M}_j
\begin{pmatrix}
\nabla P\left( \tilde{\boldsymbol{\alpha}}^g_j \right)
\\
\mathbf{0}
\end{pmatrix}.
\end{align*}

Before constructing $\tilde{M}_j$, we first provide an alternative expression for $\Sigma_j^{-1}$.
We define the $d \times d$ matrices $\Gamma^*_{j,k,l}$ and $\Delta^*_{j,k}$ as
\begin{align*}
&\Gamma^*_{j,k,1},\ldots,\Gamma^*_{j,k,p}, \Delta^*_{j,k}  =  \nonumber
\\
&\underset{\Gamma_1,\ldots,\Gamma_p, \Delta \in \mathds{R}^{d\times d}}{\text{arg min}}
\mathds{E}\left[
\text{trace}\left\{ 
\left(n^g\right)^{-1}
\left(\mathcal{V}_{j,k,1}^g - \sum_{l \neq k,j} \mathcal{V}_{j,l,1}^g \Gamma_l - \mathcal{V}^g_{j,2}\Delta \right)^\top 
\left(\mathcal{V}_{j,k,1}^g - \sum_{l \neq k,j} \mathcal{V}_{j,l,1}^g \Gamma_l - \mathcal{V}^g_{j,2}\Delta \right)
\right\}\right].
\end{align*}
We also define the $d \times d$ matrices $\Lambda^*_{j,k}$ as
\begin{align*}
&\Lambda^*_{j,1},\ldots,\Lambda^*_{j,p}  =  \nonumber
\underset{\Lambda_1,\ldots,\Lambda_p, \in \mathds{R}^{d\times d}}{\text{arg min}}
\mathds{E}\left[
\text{trace}\left\{ \left(n^g\right)^{-1}
\left(\mathcal{V}_{j,2}^g - \sum_{k \neq j} \mathcal{V}_{j,k,1}^g \Lambda_{k} \right)^\top 
\left(\mathcal{V}_{j,2}^g - \sum_{k \neq j} \mathcal{V}_{j,k,1}^g \Lambda_{k} \right)
\right\}\right].
\end{align*}
Additionally, we define the $d \times d$ matrices $C^*_{j,k}$ and $D^*_j$
\begin{align*}
&C^*_{j,k} = \mathds{E}\left[\left(n^g\right)^{-1}\left(\mathcal{V}^g_{j,k,1}\right)^\top
\left(\mathcal{V}_{j,k,1}^g - \sum_{l \neq k,j} \mathcal{V}_{j,l,1}^g \Gamma^*_{j,k,l} - \mathcal{V}^g_{j,2}\Delta^*_{j,k} \right)\right]
 \\
&D^*_j = \mathds{E}\left[\left(n^g\right)^{-1}\left(\mathcal{V}^g_{j,2}\right)^\top
\left(\mathcal{V}_{j,2}^g - \sum_{k \neq j} \mathcal{V}_{j,k,1}^g \Lambda^*_{j,k} \right)\right].
\end{align*}
It can be shown that $\Sigma_j^{-1}$ can be expressed as
\begin{align*}
\Sigma^{-1}_j =
\begin{pmatrix}
\left(C^*_{j,1}\right)^{-1} & \cdots & \mathbf{0} & \mathbf{0}
\\
\vdots & \ddots & \vdots & \vdots
\\
\mathbf{0} & \cdots & \left(C^*_{j,p}\right)^{-1} & \mathbf{0} 
\\
\mathbf{0} & \cdots & \mathbf{0} & \left(D^*_{j}\right)^{-1}
\end{pmatrix}
\begin{pmatrix}
I & -\Gamma^*_{j,1,2} & \cdots & -\Gamma^*_{j,1,p} & - \Delta^*_{j,1} 
\\
-\Gamma^*_{j,2,1} & I & \cdots & -\Gamma^*_{j,2,p}  & - \Delta^*_{j,2}
\\
\vdots & \vdots & \ddots & \vdots & \vdots
\\
-\Gamma^*_{j,p,1} & -\Gamma^*_{j,p,2} & \cdots & I & - \Delta^*_{j,p}
\\
-\Lambda^*_{j,1} & -\Lambda^*_{j,2} & \cdots & -\Lambda^*_{j,p} & I
\end{pmatrix}
.
\end{align*}
We can thus estimate $\Sigma_j^{-1}$ by estimating each of the matrices $\Gamma^*_{j,k,l}$, $\Lambda^*_{j,k}$, and $\Delta^*_{j,k}$.

Similar to our discussion of the de-biased group LASSO in Appendix A, we use a group-penalized variant of the nodewise LASSO to construct $\tilde{M}_j$.
We estimate $\Gamma^*_{j,k,l}$ and $\Delta^*_{j,k}$ as
\begin{align*}
&\tilde{\Gamma}_{j,k,1},\ldots,\tilde{\Gamma}_{j,k,p}, \tilde{\Delta}_{j,k}  =  \nonumber
\\
&\underset{\Gamma_1,\ldots,\Gamma_p, \Delta \in \mathds{R}^{d\times d}}{\text{arg min}}
\text{trace}\left\{ 
\left(n^g\right)^{-1}
\left(\mathcal{V}_{j,k,1}^g - \sum_{l \neq k,j} \mathcal{V}_{j,l,1}^g \Gamma_l - \mathcal{V}^g_{j,2}\Delta \right)^\top 
\left(\mathcal{V}_{j,k,1}^g - \sum_{l \neq k,j} \mathcal{V}_{j,l,1}^g \Gamma_l - \mathcal{V}^g_{j,2}\Delta \right)
\right\}
 +
 \\
&\quad\quad\quad\quad\quad\omega_1 \sum_{l \neq k,j} \|\Gamma_l \|_{2,*} + \omega_1 \|\Delta\|_{2,*},
\end{align*}
where $\omega_1,\omega_2 > 0$ are tuning parameters, and $\|\cdot\|_{2,*}$ is as defined in Appendix A.
We estimate $\Lambda^*_{j,k}$ as
\begin{align}
&\tilde{\Lambda}_{j,1},\ldots,\tilde{\Lambda}_{j,p}  =  \nonumber
\underset{\Lambda_1,\ldots,\Lambda_p, \in \mathds{R}^{d\times d}}{\text{arg min}}
\text{trace}\left\{ \left(n^g\right)^{-1}
\left(\mathcal{V}_{j,2}^g - \sum_{k \neq j} \mathcal{V}_{j,k,1}^g \Lambda_{k} \right)^\top 
\left(\mathcal{V}_{j,2}^g - \sum_{k \neq j} \mathcal{V}_{j,k,1}^g \Lambda_{k} \right)
\right\} + \omega_2 \sum_{l \neq j} \|\Lambda_k\|_{2,*}.
\label{Ugh2}
\end{align}
Additionally, we define the $d \times d$ matrices $\tilde{C}_{j,k}$ and $\tilde{D}_j$
\begin{align*}
&\tilde{C}_{j,k} = \left(n^g\right)^{-1}\left(\mathcal{V}^g_{j,k,1}\right)^\top
\left(\mathcal{V}_{j,k,1}^g - \sum_{l \neq k,j} \mathcal{V}_{j,l,1}^g \tilde{\Gamma}_{j,k,l} - \mathcal{V}^g_{j,2}\tilde{\Delta}_{j,k} \right)
 \\
&\tilde{D}_j = \left(n^g\right)^{-1}\left(\mathcal{V}^g_{j,2}\right)^\top
\left(\mathcal{V}_{j,2}^g - \sum_{k \neq j} \mathcal{V}_{j,k,1}^g \tilde{\Lambda}_{j,k} \right).
\end{align*}
We then take $\tilde{M}_j$ as
\begin{align*}
\tilde{M}_j =
\begin{pmatrix}
\tilde{C}^{-1}_{j,1} & \cdots & \mathbf{0} & \mathbf{0}
\\
\vdots & \ddots & \vdots & \vdots
\\
\mathbf{0} & \cdots & \tilde{C}^{-1}_{j,p} & \mathbf{0} 
\\
\mathbf{0} & \cdots & \mathbf{0} & \tilde{D}^{-1}_{j}
\end{pmatrix}
\begin{pmatrix}
I & -\tilde{\Gamma}_{j,1,2} & \cdots & -\tilde{\Gamma}_{j,1,p} & - \tilde{\Delta}_{j,1} 
\\
-\tilde{\Gamma}_{j,2,1} & I & \cdots & -\tilde{\Gamma}_{j,2,p}  & - \tilde{\Delta}_{j,2}
\\
\vdots & \vdots & \ddots & \vdots & \vdots
\\
-\tilde{\Gamma}_{j,p,1} & -\tilde{\Gamma}_{j,p,2} & \cdots & I & - \tilde{\Delta}_{j,p}
\\
-\tilde{\Lambda}_{j,1} & -\tilde{\Lambda}_{j,2} & \cdots & -\tilde{\Lambda}_{j,p} & I
\end{pmatrix}
.
\end{align*}

When $\Gamma^*_{j,k,l}$, $\Delta^*_{j,k}$, and $\Lambda^*_{j,k}$ satisfy appropriate sparsity conditions and some additional regularity assumptions, $\tilde{M}_j$ is a consistent estimate of $\Sigma_j^{-1}$ for $\omega_1 \asymp \{\log(p)/n\}^{1/2}$ and $\omega_2 \asymp \{\log(p)/n\}^{1/2}$ (see, e.g., Chapter 8 of B\"uhlmann and van de Geer \citep{buhlmann2011statistics} for a more comprehensive discussion).
Using the same argument presented in Appendix A, we are able to obtain the following bound on a scaled version of the remainder term $(\mathrm{iii})$:
\begin{align*}
&\left\|
\begin{pmatrix}
\tilde{C}_{j,1} & \cdots & \mathbf{0} & \mathbf{0}
\\
\vdots & \ddots & \vdots & \vdots
\\
\mathbf{0} & \cdots & \tilde{C}_{j,p} & \mathbf{0} 
\\
\mathbf{0} & \cdots & \mathbf{0} & \tilde{D}_{j}
\end{pmatrix}
\left\{
I - 
\left(n^g\right)^{-1}
\begin{pmatrix}
\left(\mathcal{V}_{j,1}^g\right)^\top\mathcal{V}_{j,1}^g  & \left(\mathcal{V}_{j,1}^g\right)^\top\mathcal{V}_{j,2}^g
\\
\left(\mathcal{V}_{j,2}^g\right)^\top\mathcal{V}_{j,1}^g & \left(\mathcal{V}_{j,2}^g\right)^\top\mathcal{V}_{j,2}^g
\end{pmatrix}
\tilde{M}_j
\right\}
\begin{pmatrix}
\tilde{\boldsymbol{\alpha}}^g_j - \boldsymbol{\alpha}_j^{g,*}
\\
\tilde{\boldsymbol{\theta}}^g_j - \boldsymbol{\theta}_j^{g,*}
\end{pmatrix} 
\right\|_{\infty} \leq 
\\
&\max\{\omega_1, \omega_2 \}
\left\{
\mathcal{P}\left( \tilde{\boldsymbol{\alpha}}^g_j - \boldsymbol{\alpha}^{g,*}_j \right)
+
\| \tilde{\boldsymbol{\theta}}^{g}_j - \boldsymbol{\theta}_j^{g,*} \|_2
\right\}.
\end{align*}
The remainder is $o_P(n^{-1/2})$ and hence asymptotically negligible if $n^{1/2} \max\{\omega_1, \omega_2\} \lambda \to 0$, where $\lambda$ is the tuning parameter for the regularized score matching estimator (see Theorem 2).

The de-biased estimate $\check{\alpha}^g_{j,k}$ of $\alpha^{g,*}_{j,k}$ can be expressed as
\begin{align}
\check{\alpha}^g_{j,k} = 
\tilde{\alpha}^g_{j,k} -
\left(n^g\right)^{-1}  
\tilde{C}^{-1}_{j,k}
\left(
 \mathcal{V}^g_{j,k,1} -
 \sum_{l \neq j, k} \mathcal{V}_{j,l,1}^g \tilde{\Gamma}_{j,k,l}
 \right)^\top
 \left( \mathcal{V}^g_{j,1} \tilde{\boldsymbol{\alpha}}^g_j + \mathcal{V}^g_{j,2} \tilde{\boldsymbol{\theta}}_j^g + \left(\mathcal{U}_{j,1}^g\right)^\top \mathbf{1} \right).
\end{align}
The difference between the de-biased estimator $\check{\alpha}^g_{j,k}$ and the true parameter $\alpha^{g,*}_{j,k}$ can be expressed as
\begin{align*}
\tilde{C}_{j,k}\left(\check{\alpha}^g_{j,k} - \alpha^{g,*}_{j,k}\right) =  
-&\left(n^g\right)^{-1}
\left(
 \mathcal{V}^g_{j,k,1} -
 \sum_{l \neq j, k} \mathcal{V}_{j,l,1}^g \Gamma^*_{j,k,l}
 \right)^\top
 \left( \mathcal{V}^g_{j,1} \boldsymbol{\alpha}^{g,*}_j + \mathcal{V}^g_{j,2} \boldsymbol{\theta}_j^{g,*} + \left(\mathcal{U}_{j,1}^g\right)^\top \mathbf{1} \right) +
\\
&\left(n^g\right)^{-1}
\left(\mathcal{V}^g_{j,2} \Delta^*_{j,k}\right)^\top
 \left( \mathcal{V}^g_{j,1} \boldsymbol{\alpha}^{g,*}_j + \mathcal{V}^g_{j,2} \boldsymbol{\theta}_j^{g,*}  + \left(\mathcal{U}_{j,2}^g\right)^\top \mathbf{1} \right)
\bigg\}
+ o_P\left(n^{-1/2}\right).
\end{align*}
As discussed above, the central limit theorem implies asymptotic normality of $\check{\alpha}^g_{j,k}$.
We can estimate the asymptotic variance of $\check{\alpha}^g_{j,k}$ as
\begin{align*}
\left(n^g\right)^{-2}\tilde{C}_{j,k}^{-1}\tilde{M}_{j,k}\tilde{\xi}^\top\tilde{\xi}\tilde{M}^\top_{j,k} \left(\tilde{C}_{j,k}^{-1}\right)^\top,
\end{align*}
where we define
\begin{align*}
\tilde{\xi} &=
\begin{pmatrix}
\text{diag}\left(\mathcal{V}_{j,1}^g \tilde{\boldsymbol{\alpha}}_j^{g} + \mathcal{V}_{j,2}^g\tilde{\boldsymbol{\theta}}_j^{g} \right)\mathcal{V}_{j,1}^g  + \mathcal{U}^g_{j,1}
\\
 \text{diag}\left(\mathcal{V}_{j,1}^g \tilde{\boldsymbol{\alpha}}_j^{g} + \mathcal{V}_{j,2}^g\tilde{\boldsymbol{\theta}}_j^{g}   \right)\mathcal{V}_{j,2}^g  +  \mathcal{U}^g_{j,2}
\end{pmatrix}
\\
\tilde{M}_{j,k} &= \begin{pmatrix}
-\tilde{\Gamma}_{j,k,1} & \cdots & -\tilde{\Gamma}_{j,k,k-1} & I & -\tilde{\Gamma}_{j,k,k+1} & \cdots & -\tilde{\Gamma}_{j,k,p} & - \tilde{\Delta}_{j,p}
\end{pmatrix}.
\end{align*}

\end{document}